\def\ps@headings{%
\def\@oddhead{\mbox{}\scriptsize\rightmark \hfil \thepage}%
\def\@evenhead{\scriptsize\thepage \hfil \leftmark\mbox{}}%
\def\@oddfoot{}%
\def\@evenfoot{}}
\newtheorem{theorem}{Theorem}
\newtheorem{definition}{Definition}
\newtheorem{lemma}{Lemma}
\begin{document}

\title{\LARGE \bf An Ordinary Differential Equation Framework for Stability Analysis of Networks with Finite Buffers}

\author{Xinyu~Wu,~\IEEEmembership{Student Member,~IEEE,}
        Dan~Wu,~\IEEEmembership{Member,~IEEE,}
        Eytan~Modiano,~\IEEEmembership{Fellow,~IEEE}}

%
%


%

\maketitle
\thispagestyle{empty}
\pagestyle{empty}

\begin{abstract}
We consider the problem of network stability in finite-buffer systems. We observe that finite buffer may affect stability even in simplest network structure, and we propose an ordinary differential equation (ODE) model to capture the queuing dynamics and analyze the stability in buffered communication networks with general topology. For single-commodity systems, we propose a sufficient condition, which follows the fundamental idea of backpressure, for local transmission policies to stabilize the networks based on ODE stability theory.
We further extend the condition to multi-commodity systems, with an additional restriction on the coupling level between different commodities, which can model networks with per-commodity buffers and shared buffers. The framework characterizes a set of policies that can stabilize buffered networks, and is useful for analyzing the effect of finite buffers on network stability.
\end{abstract}

\section{Introduction}
\label{sec:introduction}

Network overloading occurs ubiquitously due to burst flow injection, transmission link failures, or node malfunction. A typical feature of network overloading is queue backlog accumulation, which often results in large queuing delay, network congestion and performance degradation. Therefore, a core engineering problem is to design transmission policies to avoid network overloading, which is commonly referred to as \emph{stabilizing} the network.

In their seminal paper on network stability, Tassiulas and Ephremides showed that backpressure routing can stabilize networks whenever the packet arrival rates are within the network stability region \cite{tassiulas1990stability}. Their result elegantly solves the network stability problem for systems with unbounded buffers, and serves as a basic policy design framework for a large body of works that capture utility maximization \cite{li2014receiver,neely2008fairness,huang2012lifo,yu2018new}, delay minimization \cite{cui2015enhancing,huang2012lifo,alresaini2015backpressure}, and network fairness \cite{georgiadis2006optimal,neely2008fairness,chan2010fairness,li2014dynamic}.

However, most of the related works rely on the assumption of unbounded buffers, which deviates from the fact that in reality buffers are finite \cite{giaccone2007throughput}. In practice, internal nodes in a communication network often have limited buffers \cite{Juniper,baron2009capacity}. For example, on-chip networks have very small internal buffers due to area and power limitation, and similarly, satellite networks have small buffers on-board the satellite.  In constrast, buffers of the source nodes of the arriving packets have sufficient capacity 
to absorb bursty packet arrivals  \cite{le2010optimal}, e.g. in a satellite network the buffer at the ground terminal can be relatively large. Therefore in this paper, we assume that the internal buffers are finite while the source buffers are unbounded, {i.e., sufficiently large to not saturate. 
}

The existence of finite buffers gives rise to many additional design issues, including whether buffer overflow is permitted, whether the buffer is shared by different commodities, and whether the queue backlog of different commodities in the buffer is controllable. {In this paper, we consider general buffer settings, while we do not allow buffer overflow, i.e., packet transmission to a saturated node. This is desirable in practice since it prevents packet dropping and significantly reduces the retransmission delay due to buffer overflow \cite{le2010optimal}. Therefore, network stability is guaranteed as long as the queue backlog at all source nodes, which have unbounded buffers, is bounded.} 

The introduction of finite buffers may induce critical difference in network performance. Consider the toy example in Fig.~\ref{fig:TwoNodeToyExample}, showing that a policy may no longer stabilize the network when  buffers are finite. The system transmits two commodities with arrival rate $\lambda_1$ and $\lambda_2$, and destination node $T_1$ and $T_2$ respectively. We assume that a policy based on the principal idea of backpressure is applied\footnote{Details of this policy will be presented as \eqref{eqn:DiffBacklogBP-Single} in Section \ref{subsec:single_commodity_stability}.}, where link $(\ell,K)$ transmits with rate equal to  the capacity value $c_{\ell K}$ when the queue backlog in node $\ell$ is longer than in node $K$ ($\ell=1,2$), while otherwise it does not transmit, in any time unit. The buffer at node $K$ is shared by the two commodities. 
We observe that when $b_K$, the buffer size of node $K$, is infinite, then this policy can stabilize commodity 2 for $\lambda_2\in [0,3]$ when $\lambda_1>2$. However, when $b_K$ is finite and small enough such that it can be saturated, for example $b_K=6$ in Fig.~\ref{fig:TwoNodeToyExample}, then this policy can  stabilize commodity 2 only for $\lambda_2\in \left[0,1.5\right]$\footnote{Brief explanation is in the caption of Fig.~\ref{fig:TwoNodeToyExample}, and derivation of the result is deferred to Section \ref{subsec:shared-buffer}.}. 
This unveils that finite buffer affects stability even in the simplest one-hop network structure, giving rise to the need to study the stability of the queue dynamics of general buffered systems.

\begin{figure}[!htbp]
\centering
\includegraphics[width=0.9\linewidth]{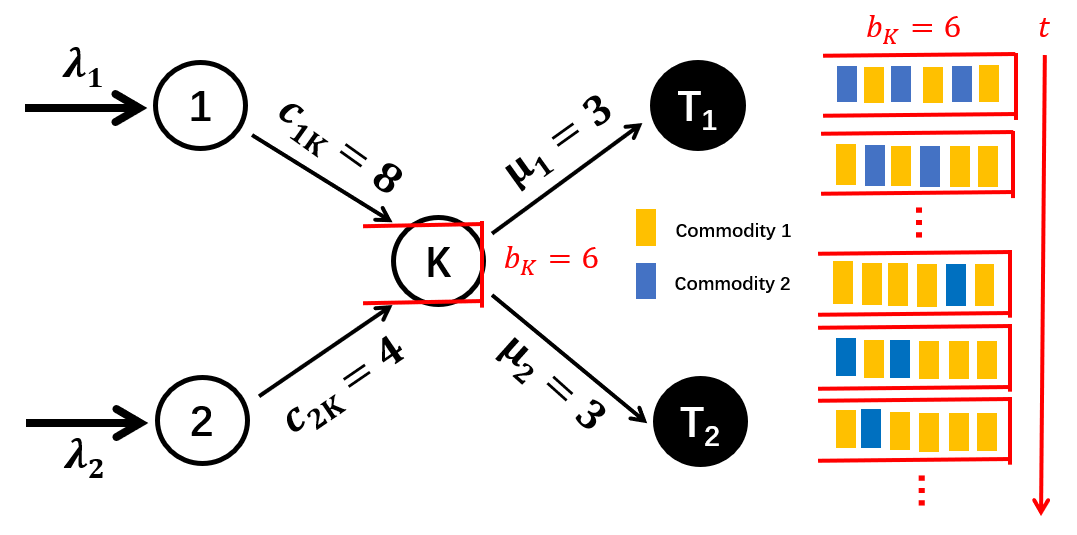}
\caption{\small Finite buffer may affect stability result. On the right is an example of the queue dynamics in node $K$ following the backpressure policy. 
Due to its higher injection rate into the buffer of $K$, commodity 1 takes up higher ratio in node $K$ and squeezes out commodity 2 under backpressure, and in the final state we can show that average number of commodity 2 packets in the buffer of node $K$ is $1.5$, which arises from $1.5=(\mu_1/c_{1K})\times c_{2K}$, with details deferred to Section \ref{subsec:shared-buffer}. Therefore the actual throughput of commodity 2 is $1.5$, less than $\mu_2=3$, due to the finite buffer.}
\label{fig:TwoNodeToyExample}
\end{figure}

A plethora of works have tried to incorporate finite buffers in network analysis. Giaccone et. al. studied the throughput region of finite-buffer network systems and discussed the relationship between buffer size and throughput under two proposed policies \cite{giaccone2007throughput}. Le et. al. studied the relationship between buffer size and network utility under a modified backpressure mechanism \cite{le2010optimal}, and Lien et. al. designed a dynamic algorithm to stabilize any admissible traffic conditioned on a finite internal buffer with size larger than a certain bound \cite{lien2011maximizing}. All of these works propose \emph{specific} policies and analyze their performance on finite-buffer systems, under \emph{certain assumptions} such as deterministic routing \cite{giaccone2007throughput}, separate buffers for different commodities \cite{giaccone2007throughput,le2010optimal}, equal buffer sizes \cite{giaccone2007throughput,le2010optimal}, and minimum buffer size requirements \cite{lien2011maximizing}. Thus, a systematic study of buffered networks with \emph{general buffer size setting} and \emph{general control policies} is still lacking.


In this paper, we propose an ordinary differential equation (ODE) model to analyze network stability of a general buffered system. Unlike prior works that study stochastic dynamics, our approach uses deterministic dynamics which  
simplifies the analysis while still capturing the key aspects of finite-buffer systems and their impact on network stability. Our main goal is to show that the ODE framework serves as a promising approach for analyzing finite-buffer systems,  {which elegantly models transmission policies under arbitrary buffer settings}. To that end, we are able to 
obtain more general results as compared to prior works on finite-buffer system stability.

{Based on the ODE framework,} our contributions include: (i) For single-commodity systems, we derive a sufficient condition for a set of local policies to stabilize the network based on ODE stability theory, and our condition has similar physical intuition to backpressure; (ii) For multi-commodity systems, we similarly derive a sufficient condition for network stability, with an additional condition that captures the coupling level between different commodities. The core idea is that these conditions reduce the network stability problem to a problem of testing the existence of an equilibrium point, and thus facilitate stability analysis. The existence or lack of equilibrium points in multi-commodity systems can explain the effect of finite buffer on stability shown in Fig.~\ref{fig:TwoNodeToyExample}.


Our framework resembles the fluid model \cite{dai1995positive,dai2005maximum} which was proposed as a flow-based approximation to the discrete network systems to obtain results for throughput \cite{shah2011fluid} and delay \cite{markakis2018delay}. However, the fluid model captures the scaled limit of the queue backlogs which for nodes with finite buffers is not very meaningful. A more closely related framework is the ODE model used to study the Transmission Control Protocol (TCP) \cite{misra1999stochastic,gu2004integrating}. While sharing similar queue dynamics modeling with   \cite{misra1999stochastic,gu2004integrating}, our approach can capture more general policies.



\section{Single-Commodity System}

In this section, we introduce the ODE model for single-commodity systems in buffered communication networks, and utilize ODE stability theory to study network stability. Specifically, we reveal a general sufficient condition for local policies to stabilize the systems. 
{The results, with explicit guidance for transmission policy design in practice, serve as the basis for our analysis of multi-commodity systems and are the main technical contribution of this paper.}

\subsection{Basic Setting}
\label{subsec:basic-setting}

Given an acyclic directed network $\mathcal{G}=(\mathcal{V},\mathcal{E})$ with $|\mathcal{V}|=N$ nodes and $|\mathcal{E}|=M$ links. Each node $i$ has a queue buffer, whose size is denoted by $b_i$. The queue length at node $i$ at time $t$ is denoted by $q_i(t)$, where $q_i(t)\in [0,b_i], \forall t$.
We use an $N\times 1$ vector $\mathbf{q}(t)$ to denote the queue length vector of the system, and {denote $\mathcal{Q}:=\times_{i=1}^N [0,b_i]$ as the set of feasible queue length vectors, where $\times_{i=1}^{N}$ denotes the $N$-dimensional Cartesian product.}
Packets in the buffer of node $i$ can be transmitted to an adjacent downstream node $j$ through link $(i,j)\in \mathcal{E}$. The transmission rate on link $(i,j)$ at time $t$, denoted by $g_{ij}(t)$, captures the number of packets transmitted over $(i,j)$ in a time unit. Each link $(i,j)$ is associated with a capacity value $c_{ij}$, which is the largest flow that link $(i,j)$ can transmit at any time. Specifically, $0\leq g_{ij}(t) \leq c_{ij},~\forall (i,j)\in \mathcal{E}$.

In communication systems, the transmission rate $g_{ij}(t)$ on each link $(i,j)$ is generally determined by the controller at node $i$, according to the queue length vector $\mathbf{q}(t)$ and network configurations, including link capacity values $\{c_{ij}\}_{(i,j)\in \mathcal{E}}$ and node buffer size values $\{b_i\}_{i\in \mathcal{V}}$. Therefore $g_{ij}(t)$ is also referred to as the \emph{transmission policy} over link $(i,j)$. In this paper, we consider a set of \emph{local}, {\emph{stationary}} policies that \emph{do not allow buffer overflow}. (i) Locality: {We say that a policy is local if $g_{ij}(t)$ depends only on
$q_i(t)$ and $q_j(t)$.} {Local policies are attractive due to their simple implementation and light communication overhead for information exchange. }
This definition is to some extent extremely local as in reality, node $i$ can have information on all $q_k(t)$'s for $(i,k)\in \mathcal{E}$.
However, we show that even with this restricted queue information, a policy can stabilize the network under certain general conditions {with clear physical intuition}. {(ii) Stationarity: A policy is stationary if it does not depend on time explicitly. 
Note that stationary policies can depend on the network state (e.g., queue size, channel conditions); i.e., a local policy $g_{ij}(t)$ can be denoted by
$g_{ij}(q_i(t),q_j(t))$\footnote{A local policy $g_{ij}(t)$ should also depend on $c_{ij}$, $b_i$ and $b_j$, but for
brevity we neglect them in the notation as they do not vary with time.}. We neglect the notation $t$ in the following
unless specified.} (iii) No buffer overflow: Any link $(i,j)$ must stop packet transmission once the buffer of node $j$ is saturated, i.e., $g_{ij}=0$ when $q_j=b_j$. In addition to the above three constraints on the policy, we naturally have $g_{ij}=0$ when $q_i=0$, which means link $(i,j)$ has nothing to transmit when the buffer of upstream node $i$ is empty. {For technical convenience, we assume $g_{ij}(q_i,q_j)$ is first-order differentiable with respect to $q_i$ and $q_j$, and we show that it can well approximate discrete policies in Section \ref{subsec:single_commodity_stability}.}

Packets are injected into the networks {at their source nodes}. We denote the packet injection rate at node $i$ as $\lambda_i(t)$ and assume that $\lambda_i(t)$ is independent from the queue length vector $\mathbf{q}(t)$. 
Packets may depart from the networks at any node, and we model this by introducing a meta destination node $T$ to receive the departing packets. The transmission rate from node $i$ to $T$ is $g_{iT}(q_i)$, purely based on $q_i$ under the local policy, and the capacity is denoted by $c_{iT}:=\mu_i$, where $\mu_i$ denotes the maximum departure rate at node $i$. 



We now formulate the ODE to characterize the queue dynamics. The general form is given by,
\begin{equation}
\label{eqn:ODE}
\dot{\mathbf{q}}=\frac{d\mathbf{q}}{dt}:=\mathbf{f}(\mathbf{q}),
\end{equation}
where $\mathbf{f}(\mathbf{q}):=[f_i(\mathbf{q})]_{i=1}^N\in \mathbb{R}^{N}$ denotes the system dynamics.
Due to flow conservation at each node, under the local policy we define earlier, we have for $\forall i\in \mathcal{V}$,
\begin{equation}
\label{eqn:local-policy}
\small
f_i(\mathbf{q})=\lambda_i(t)+\sum_{k:(k,i)\in \mathcal{E}} g_{ki}(q_k,q_i)-\sum_{j:(i,j)\in \mathcal{E}} g_{ij}(q_i,q_j)-g_{iT}(q_i).
\end{equation}
In the above dynamics, the term $\lambda_i(t)+\sum_{k:(k,i)\in \mathcal{E}} g_{ki}(q_k,q_i)$ denotes the total packet inflow rate to node $i$, while the term $\sum_{j:(i,j)\in \mathcal{E}} g_{ij}(q_i,q_j)+g_{iT}(q_i)$ denotes the total packet departure rate from node $i$. 

\subsection{Stability Analysis}
\label{subsec:single_commodity_stability}

{
Next, we study network stability of the dynamics \eqref{eqn:local-policy} under local policies. We consider the queue length stability, as given by Definition \ref{def:stable}.

\begin{definition}
\label{def:stable}
The system \eqref{eqn:local-policy} is queue length stable if $\forall i\in \mathcal{V}$, $\lim_{t\rightarrow \infty} \sup q_i(t)<\infty$.
\end{definition}

Queue length stability ensures the boundedness of the queue backlog at each node. In a system with finite buffers, where overflow is not permitted, instability can only occur at the nodes with unbounded buffers (i.e., source nodes). We note that the assumption of ``unbounded'' buffers at the source node captures the reality that in many systems internal buffers are small, and buffers at source nodes are relatively large. It is also a useful modeling tool that captures the impact of finite buffers on congestion by ``pushing'' congestion to the source nodes. Finally, we note that it can be easily shown that queue stability at the source nodes implies rate stability, which is a more meaningful notion of stability in finite-buffer systems\footnote{Rate stability implies that the arrival rate is equal to the departure rate \cite{neely2010stability}.}.}

We study the queue length stability based on the stability analysis of the \emph{equilibrium points} of ODE system.
\begin{definition}
$\mathbf{q}^*$ is an equilibrium point of the system \eqref{eqn:local-policy} if $\mathbf{f}(\mathbf{q}^*)=\boldsymbol{0}$.
\end{definition}

The stability of an equilibrium point $\mathbf{q}^*$ is defined based on either a Lyapunov function or the Jacobian matrix. In this paper, we take the latter way to define stability.
\begin{definition}
The equilibrium point $\mathbf{q}^*$ of an ODE system \eqref{eqn:local-policy} is asymptotically stable if all the eigenvalues of the Jacobian matrix $\mathbf{J}$ at $\mathbf{q^*}$, i.e.,
$$
\mathbf{J}\bigg|_{\mathbf{q}=\mathbf{q}^*}=\left[\frac{\partial f_i(\mathbf{q})}{\partial q_j}\right]\bigg|_{\mathbf{q}=\mathbf{q}^*,~i,j=1,2,\dots,N}
$$
have {negative real parts.}
\end{definition}
%



An equilibrium point of an ODE being stable means that the dynamics will not move away from this equilibrium under small disturbances, which is different from queue length stability.
In the following, our goal is to connect the notion of equilibrium point stability to queue length stability. The intuitive idea is that if the system in \eqref{eqn:ODE} has a unique asymptotically stable equilibrium point, then queue length stability follows as the dynamics will not diverge to infinity.
{
Specifically, we first seek a sufficient condition for a policy $g_{ij}(q_i,q_j)$ such that any equilibrium point $\mathbf{q}^*$ is asymptotically stable, which ensures local queue length stability. We then extend it to a global condition which  guarantees the global queue length stability. 

}
%


\subsubsection{Local Result}
We first derive a sufficient condition of the local policy such that an equilibrium point of the system \eqref{eqn:local-policy} is asymptotically stable.


\begin{theorem}
\label{prop:single1}
For a local policy, if there exists an equilibrium point $\mathbf{q}^*$, such that at $\mathbf{q}=\mathbf{q}^*$,
\begin{equation}
\label{eqn:local_stability}
\begin{cases}
\frac{\partial g_{ij}(q_i,q_j)}{\partial q_i} > 0,~\frac{\partial g_{ij}(q_i,q_j)}{\partial q_j} < 0,~\forall (i,j)\in \mathcal{E}, \\
\frac{\partial g_{iT}(q_i)}{\partial q_i} > 0,~\exists i\in \mathcal{V}
\end{cases}
\end{equation}
then the ODE is asymptotically stable at $\mathbf{q}^*$.
\end{theorem}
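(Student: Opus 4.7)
The plan is to show that every eigenvalue of $\mathbf{J}|_{\mathbf{q}=\mathbf{q}^*}$ has strictly negative real part, which by the paper's definition is exactly asymptotic stability. I will proceed in three steps: identify the Metzler structure of $\mathbf{J}$, compute its column sums and apply Gershgorin to obtain non-positivity, and upgrade to strict negativity via Perron--Frobenius theory.

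First, differentiating \eqref{eqn:local-policy} shows that the off-diagonal entry $J_{ij}$ (for $i\neq j$) equals $\partial g_{ji}/\partial q_j$ when $(j,i)\in\mathcal{E}$, equals $-\partial g_{ij}/\partial q_j$ when $(i,j)\in\mathcal{E}$, and vanishes otherwise; each such entry is non-negative by \eqref{eqn:local_stability}. The diagonal entry $J_{jj}$ is a sum of non-positive terms by the same hypothesis. Hence $\mathbf{J}$ is Metzler with non-positive diagonal. Next, when summing column $j$ of $\mathbf{J}$, each link adjacent to $j$ contributes the same partial derivative to $J_{jj}$ and to one off-diagonal entry with opposite sign, so these terms cancel pairwise. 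What survives is
\begin{equation*}
\sum_{i=1}^N J_{ij}=-\frac{\partial g_{jT}(q_j^*)}{\partial q_j}\leq 0,
\end{equation*}
with strict inequality for at least one $j$ by the hypothesis $\partial g_{iT}/\partial q_i>0$ for some $i$. Column-version Gershgorin then places every eigenvalue $\lambda$ of $\mathbf{J}$ inside a disk whose rightmost real point equals its column sum, so $\operatorname{Re}(\lambda)\leq 0$ for every eigenvalue.

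The main obstacle is to rule out purely imaginary eigenvalues, which Gershgorin alone permits whenever a column sum is zero. To handle this I would invoke Perron--Frobenius theory for irreducible Metzler matrices: assuming the underlying undirected graph of $\mathcal{G}$ is connected so that $\mathbf{J}$ is irreducible, the spectral abscissa $\mu:=\max_k\operatorname{Re}(\lambda_k)$ is itself a real eigenvalue of $\mathbf{J}$ and admits a strictly positive eigenvector $\mathbf{v}>\mathbf{0}$. If one had $\mu=0$, then $\mathbf{J}\mathbf{v}=\mathbf{0}$, and left-multiplying by $\mathbf{1}^\top$ together with the column-sum identity above would yield
\begin{equation*}
0=\mathbf{1}^\top\mathbf{J}\mathbf{v}=-\sum_{i}\frac{\partial g_{iT}(q_i^*)}{\partial q_i}\,v_i<0,
\end{equation*}
a contradiction since $\mathbf{v}>\mathbf{0}$ and at least one $\partial g_{iT}/\partial q_i>0$. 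Hence $\mu<0$, so every eigenvalue of $\mathbf{J}$ at $\mathbf{q}^*$ has strictly negative real part, establishing asymptotic stability of the equilibrium.
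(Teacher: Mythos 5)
Your proof is correct, and it takes a genuinely different (and tighter) route than the paper's. The paper proceeds by constructing a diagonal Lyapunov certificate: it sets $\mathbf{J}_0 := \mathbf{J} - \boldsymbol{\Lambda}$ with $\boldsymbol{\Lambda}=\mathrm{diag}\{-\partial g_{iT}/\partial q_i\}$, uses Perron--Frobenius on a shift of $\mathbf{J}_0$ to produce a strictly positive null vector $\boldsymbol{\delta}$ of $\mathbf{J}_0$, then takes $\mathbf{A}=\mathrm{diag}\{\delta_i^{-1}\}$ and verifies by an explicit quadratic-form computation that $\mathbf{A}\mathbf{J}+\mathbf{J}^T\mathbf{A}\prec 0$, invoking the Lyapunov inequality characterization of Hurwitz matrices. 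You instead argue purely spectrally: the Metzler structure and the column-sum cancellation give $\mathrm{Re}(\lambda)\leq 0$ by Gershgorin, and then Perron--Frobenius applied directly to $\mathbf{J}$ identifies the spectral abscissa as a real eigenvalue with a positive eigenvector $\mathbf{v}$, so that the identity $\mathbf{1}^{\top}\mathbf{J}\mathbf{v}=-\sum_i(\partial g_{iT}/\partial q_i)\,v_i<0$ rules out the boundary case. Your argument is shorter, avoids the Lyapunov detour and the explicit quadratic-form bookkeeping, and isolates exactly where the departure term forces strict decay. What the paper's route buys in exchange is a constructive \emph{diagonal} Lyapunov function, which can be useful beyond the eigenvalue claim itself. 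Both arguments implicitly rely on irreducibility of $\mathbf{J}$ (equivalently, connectivity of the underlying undirected graph); you state this hypothesis explicitly, which is cleaner than the paper's appeal to acyclicity at that step. One phrasing to tighten: you introduce Perron--Frobenius as ``ruling out purely imaginary eigenvalues,'' but what you actually prove is the stronger statement $\mu<0$, which dispatches the whole boundary at once; it is worth stating the argument that way rather than as a case analysis.
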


\begin{proof}
\emph{(Sketch)} According to \eqref{eqn:local-policy}, given any node $u\in \mathcal{V}$, and for $\forall i=1,\dots,N$,
\begin{equation}
\label{eqn:Jacobian_single}
\begin{aligned}
\mathbf{J}_{iu}
=\begin{cases}
0,~(i,u)\notin \mathcal{E}, (u,i)\notin \mathcal{E}, i\neq u \\
-\frac{\partial g_{iu}(q_i,q_u)}{\partial q_u},~(i,u)\in \mathcal{E}  \\
\frac{\partial g_{ui}(q_u,q_i)}{\partial q_u},~(u,i)\in \mathcal{E}  \\
\sum_{k:(k,u)\in \mathcal{E}} \frac{\partial g_{ku}(q_k,q_u)}{\partial q_u}
\\ \quad - \sum_{j:(u,j)\in \mathcal{E}} \frac{\partial g_{uj}(q_u,q_j)}{\partial q_u} - \frac{\partial g_{uT}(q_u)}{\partial q_u},~i=u \\
\end{cases}
\end{aligned}
\end{equation}

Under the condition \eqref{eqn:local_stability}, we can verify that
all the diagonal entries of $\mathbf{J}$ are negative, and $|\mathbf{J}_{ii}|\geq \sum_{u:u\neq i}|\mathbf{J}_{ki}|,~\forall i\in \mathcal{V}$. Hence $\mathbf{J}$ is a column diagonally dominant matrix. This indicates that all the eigenvalues have non-positive real parts by the Gershgorin circle theorem \cite{li2019eigenvalue}, and starting from this fact, we can further prove that all eigenvalues have negative real parts under \eqref{eqn:local_stability}, with details deferred in Appendix \ref{subsec:proof_theorem_1}. 
\end{proof}

Theorem \ref{prop:single1} conveys that under any local policy so that the two conditions in \eqref{eqn:local_stability} hold, the network will be queue length stable given the initial queue length lies in a sufficiently small neighborhood of the equilibrium point. The first condition is related to the intuition of the backpressure policy \cite{tassiulas1990stability}: The fluid flows from high pressure nodes to low pressure nodes. The queue length represents the pressure, and once the pressure at the upstream node $i$ increases, then pressure difference increases for $i$ and $j$ thus $g_{ij}(q_i,q_j)$, the flow over $(i,j)$, should be larger; in contrast, once the pressure at the downstream node $j$ increases, the difference decreases and thus $g_{ij}(q_i,q_j)$ should be smaller. {The second condition means there exists at least one node (egress node) where packets can depart from the network }, and the departure rate increases as more packets accumulate in the buffer. We show later in this section specific examples of network policies, including backpressure, that satisfy both conditions.

Different from prior works proposing and proving queue length stability under a \emph{specific} policy, Theorem \ref{prop:single1} presents an explicit and intuitive condition which \emph{generalizes} a set of local policies that can guarantee an equilibrium point to be stable, with no limitations over the buffer setting embedded in the policy function $g_{ij}(q_i,q_j),~\forall (i,j)\in \mathcal{E}$.


\subsubsection{Global Result} {Theorem \ref{prop:single1} can only guarantee local stability. In Theorem \ref{prop:single2} we extend to a global stability result, by identifying a sufficient condition for the local policy to have at most one globally asymptotically stable equilibrium point, which is a crucial step towards queue length stability.}

\begin{theorem}
\label{prop:single2}
If for all queue length vector $\mathbf{q}\in \mathcal{Q}$, the policy $g_{ij}(q_i,q_j)$ for any $(i,j)\in \mathcal{E}$ satisfies \eqref{eqn:local_stability}, 
then the system \eqref{eqn:local-policy} either has a unique asymptotically stable equilibrium point or does not have any equilibrium points.
\end{theorem}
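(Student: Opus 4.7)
The plan is to reduce the uniqueness claim to the invertibility of an ``integrated Jacobian'' and then reuse, essentially verbatim, the structural argument from the proof of Theorem \ref{prop:single1}. Asymptotic stability of any equilibrium already follows from Theorem \ref{prop:single1}, since the hypothesis of Theorem \ref{prop:single2} requires \eqref{eqn:local_stability} to hold at every $\mathbf{q}\in\mathcal{Q}$ and, in particular, at any candidate equilibrium. So the new content is the ``at most one equilibrium'' part of the alternative.

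Suppose, for contradiction, that $\mathbf{q}^{*}$ and $\mathbf{q}^{**}$ are two distinct equilibria in $\mathcal{Q}$. The set $\mathcal{Q}=\times_{i=1}^{N}[0,b_{i}]$ is convex, so the line segment $\mathbf{q}(s):=(1-s)\mathbf{q}^{*}+s\mathbf{q}^{**}$, $s\in[0,1]$, lies inside $\mathcal{Q}$. Applying the fundamental theorem of calculus componentwise to $\mathbf{f}$ along this segment,
\begin{equation*}
\mathbf{0}=\mathbf{f}(\mathbf{q}^{**})-\mathbf{f}(\mathbf{q}^{*})=\Bigl(\int_{0}^{1}\mathbf{J}(\mathbf{q}(s))\,ds\Bigr)(\mathbf{q}^{**}-\mathbf{q}^{*})=:\bar{\mathbf{J}}\,(\mathbf{q}^{**}-\mathbf{q}^{*}),
\end{equation*}
so uniqueness reduces to showing that $\bar{\mathbf{J}}$ is nonsingular.

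Next I would verify that $\bar{\mathbf{J}}$ inherits the sign-and-dominance structure of the pointwise Jacobians. Using \eqref{eqn:Jacobian_single} together with \eqref{eqn:local_stability}, each $\mathbf{J}(\mathbf{q}(s))$ has strictly negative diagonal entries, non-negative off-diagonal entries whose support matches the adjacency pattern of $\mathcal{G}$, and satisfies, column by column,
\begin{equation*}
|\mathbf{J}_{uu}(\mathbf{q}(s))|-\sum_{i\neq u}|\mathbf{J}_{iu}(\mathbf{q}(s))|=\frac{\partial g_{uT}(q_{u}(s))}{\partial q_{u}}\geq 0,
\end{equation*}
with strict inequality at any egress column $u$. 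Because the integral is linear and preserves signs, $\bar{\mathbf{J}}$ has the same sign pattern, strictly negative diagonals, weak column diagonal dominance, and strict dominance at any egress column whose derivative remains positive throughout the segment. Applying the Gershgorin-circle-plus-refinement argument of Appendix \ref{subsec:proof_theorem_1} to $\bar{\mathbf{J}}$ in place of $\mathbf{J}$ shows that every eigenvalue of $\bar{\mathbf{J}}$ has negative real part, so $\bar{\mathbf{J}}$ is invertible and $\mathbf{q}^{**}=\mathbf{q}^{*}$, a contradiction. Combined with Theorem \ref{prop:single1} applied at the (now unique) equilibrium, this yields the stated alternative.

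The step I expect to be the main obstacle is the egress-column bookkeeping. Condition \eqref{eqn:local_stability} only asserts that \emph{some} $i$ satisfies $\partial g_{iT}/\partial q_{i}>0$, and in principle the identity of this ``active'' egress node could depend on $\mathbf{q}$. The clean and, I believe, intended reading is that the egress set is determined by the network topology and each such node has $\partial g_{iT}/\partial q_{i}>0$ on all of $\mathcal{Q}$, in which case strict dominance at that column transfers to $\bar{\mathbf{J}}$ immediately. If one only has the pointwise version, the argument still gives $\sum_{u}\bigl[|\bar{\mathbf{J}}_{uu}|-\sum_{i\neq u}|\bar{\mathbf{J}}_{iu}|\bigr]>0$, but upgrading this to invertibility then requires combining column dominance with the irreducibility afforded by the acyclic connectivity of $\mathcal{G}$ to the egress set---a standard M-matrix-style argument, but the one place in the proof that genuinely demands care.
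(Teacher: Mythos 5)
Your argument is correct, but it takes a genuinely different route from the paper's. The paper's proof is a two-line contradiction argument: it invokes Theorem \ref{prop:single1} to conclude that every equilibrium is asymptotically stable, and then appeals---without justification---to the claim that if the system has more than one equilibrium, at least one must fail to be asymptotically stable. That claim is true here, but it is a Poincar\'e--Hopf/degree-theoretic fact (the flow points inward on $\partial\mathcal{Q}$, each asymptotically stable hyperbolic zero carries index $(-1)^N$, and the indices must sum to $(-1)^N$), not an elementary observation. You instead give a direct algebraic proof of uniqueness via the integrated Jacobian $\bar{\mathbf{J}}=\int_0^1\mathbf{J}(\mathbf{q}(s))\,ds$: the mean value identity $\mathbf{0}=\bar{\mathbf{J}}(\mathbf{q}^{**}-\mathbf{q}^{*})$ reduces uniqueness to nonsingularity of $\bar{\mathbf{J}}$, which follows because the sign pattern, support, and column-dominance structure of the pointwise Jacobians are all preserved under averaging. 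This route is more self-contained, avoids the implicit degree-theory step, and makes explicit exactly which structural properties of $g_{ij}$ are being used. The price is the ``egress-column bookkeeping'' you flag: weak column dominance with strict dominance only at egress columns gives nonsingularity only when combined with irreducibility of $\bar{\mathbf{J}}$, i.e.\ connectivity of the underlying undirected graph to the egress set. This is a genuine extra step, but it is not an extra hypothesis relative to the paper: the proof of Theorem \ref{prop:single1} in Appendix \ref{subsec:proof_theorem_1} already uses Perron--Frobenius on $\mathbf{J}_\theta:=\mathbf{J}_0+\theta\mathbf{I}$, which requires precisely the same irreducibility. So once that connectivity is granted (as the paper tacitly does), your argument closes and delivers the same alternative as Theorem \ref{prop:single2} by a cleaner, non-topological path.
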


\begin{proof}
Suppose there exists more than one equilibrium point, then the ODE system must have some equilibrium point $\tilde{\mathbf{q}}$ that is not asymptotically stable. However, by Theorem \ref{prop:single1}, since \eqref{eqn:local_stability} holds at $\tilde{\mathbf{q}}$, then $\tilde{\mathbf{q}}$ should be asymptotically stable, which is a contradiction.
\end{proof}

{
Theorem \ref{prop:single2} ensures that any equilibrium point is unique and asymptotically stable when all feasible queue length vectors satisfy \eqref{eqn:local_stability} under a local policy. This indicates that under such a policy, the problem to determine the queue length stability of the system can be reduced to determine whether there exists any feasible solution to the equation $\mathbf{f}(\mathbf{q})=\mathbf{0}$. This reduces the network stability problem to a feasibility test problem of $N$-dim equations, which facilitates network stability analysis especially under large $N$.}

To interpret Theorem \ref{prop:single2} more concretely, we propose the following policy examples that globally satisfy \eqref{eqn:local_stability}.

\textbf{Policy based on Backpressure:} At time $t$, each node $i$ compares its queue length to the queue length of each of its downstream nodes $j$. If $q_i(t)>q_j(t)$, link $(i,j)$ transmits with its capacity value $c_{ij}$, and otherwise it does not transmit. The policy can be formulated using differentiable rate functions as follows.
\begin{equation}
\label{eqn:DiffBacklogBP-Single}
\begin{cases}
g_{ij}(q_i,q_j)=\frac{1}{1+e^{-a(q_i-q_j-\epsilon)}}\frac{1}{1+e^{-a(b_j-\epsilon-q_j)}}c_{ij},~\forall (i,j)\in \mathcal{E} \\
g_{iT}(q_i)=\frac{1}{1+e^{-a(q_i-\epsilon)}}\mu_i,~\text{if $i$ is an egress node}
\end{cases}
\end{equation}
where $a>0$ and $\epsilon>0$ are preset values. It is easy to verify that \eqref{eqn:DiffBacklogBP-Single} satisfies \eqref{eqn:local_stability} globally for any feasible $\mathbf{q}$. The form of \eqref{eqn:DiffBacklogBP-Single} matches backpressure under $a\rightarrow \infty$ and  $\epsilon:=1/\sqrt{a}\rightarrow 0$, which transmits the packets from $i$ to $j$ with rate $c_{ij}$ if and only if $q_i>q_j$ and $q_j<b_j$, i.e., the buffer of $j$ is not saturated. This shows that \eqref{eqn:DiffBacklogBP-Single} is an approximation to backpressure under sufficiently large $a$ and small $\epsilon$. {Moreover, the form of $g_{iT}(q_i)$ guarantees work-conservation under $a\rightarrow \infty$ and  $\epsilon:=1/\sqrt{a}\rightarrow 0$, i.e., maximum departure rate if there are packets in the buffer.}


\vspace{1mm}
\textbf{Policy based on Buffer Occupancy Level:} Consider
$$
\begin{cases}
g_{ij}(q_i,q_j)=\frac{1}{1+e^{-a(q_i-\epsilon)}}\left(1-\frac{q_j}{b_j}\right)c_{ij},~\forall (i,j)\in \mathcal{E}\\
g_{iT}(q_i)=\frac{1}{1+e^{-a(q_i-\epsilon)}}\mu_i, ~\text{if $i$ is an egress node}
\end{cases}
$$
We can similarly verify it globally satisfies \eqref{eqn:local_stability}. Taking $a$ large enough and $\epsilon \rightarrow 0$ to guarantee $g_{ij}=0$ when $q_i=0$. The transmission rate of link $(i,j)$ in this policy declines linearly with respect to $q_j/b_j$, the buffer occupancy level of node $j$. This policy does not transmit with rate equal to link capacity when the downstream node's buffer is not empty, {but Theorem \ref{prop:single2} reveals that it can also stabilize the network if there exists an equilibrium point for \eqref{eqn:local-policy} under this policy.}

\subsection{Existence of Equilibrium Point}
\label{subsec:equilibrium-existence}

With Theorem \ref{prop:single2}, The remaining gap to proving queue length stability for a local policy that satisfies \eqref{eqn:local_stability} globally is to show that there exists an equilibrium point for \eqref{eqn:local-policy} under this policy. There is no well-known answer to this question except verifying the feasibility of \eqref{eqn:local-policy} directly. We identify one sufficient condition for the existence of an equilibrium point in Lemma \ref{prop:single3}, proved by Poincare-Miranda Theorem, a multi-dimensional version of the intermediate value theorem. 

%

{\small \textbf{Poincare-Miranda Theorem} \cite{kulpa1997poincare}:
Consider $n$ continuous functions of $n$ variables, $g_{1},\ldots ,g_{n}$. Assume that for each variable $x_{i}$, the function $g_{i}$ is constantly negative when $x_{i}=-1$ and constantly positive when $x_{i}=1$. Then there is a point in the $n$-dimensional cube $[-1,1]^n$ such that $g_{1},\ldots ,g_{n}$ are simultaneously equal to $0$.}

\begin{lemma}
\label{prop:single3}
Suppose that there exists finite values $\{\bar{b}_j\}_{i=1}^N$ and $\{\underline{b}_j\}_{i=1}^N$ such that for every node $i$ and any $q_j\in[\underline{b}_j,\bar{b}_j],~\forall j\neq i$: (i) when $q_i=\bar{b}_i$, the policy leads to $f_i(\mathbf{q})\leq 0$; (ii) when $q_i=\underline{b}_i$, $f_i(\mathbf{q})\geq 0$, then system \eqref{eqn:local-policy} has an equilibrium point in $\bar{\mathcal{B}}:=\times_{i=1}^N [\underline{b}_i,\bar{b}_i]$.
\end{lemma}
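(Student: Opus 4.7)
The plan is to apply the Poincar\'e-Miranda theorem stated just above the lemma, after an affine rescaling of the box $\bar{\mathcal{B}}$ to the standard cube $[-1,1]^N$. First I would introduce the coordinate change $x_i := \frac{2 q_i - (\bar{b}_i + \underline{b}_i)}{\bar{b}_i - \underline{b}_i}$, which is a bijective affine homeomorphism from $\bar{\mathcal{B}}$ onto $[-1,1]^N$ (if some coordinate satisfies $\underline{b}_i = \bar{b}_i$, that coordinate is already pinned and the problem reduces to the remaining coordinates). Then I would define $g_i(\mathbf{x}) := -f_i(\mathbf{q}(\mathbf{x}))$ for $i=1,\dots,N$, so that a common zero of the $g_i$, pulled back to $\bar{\mathcal{B}}$, is exactly an equilibrium point of \eqref{eqn:local-policy}.

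Next I would verify the two hypotheses of Poincar\'e-Miranda. Continuity of each $g_i$ on $[-1,1]^N$ is immediate from the first-order differentiability of the local policies $g_{ij}(q_i,q_j)$ assumed in Section \ref{subsec:basic-setting}, combined with the flow-conservation formula \eqref{eqn:local-policy}. For the boundary sign conditions, note that $x_i=-1$ corresponds to $q_i=\underline{b}_i$, and by hypothesis (ii) of the lemma we have $f_i(\mathbf{q})\ge 0$ for all $q_j\in[\underline{b}_j,\bar{b}_j]$, $j\neq i$, so $g_i\le 0$ on the face $\{x_i=-1\}$. Symmetrically, hypothesis (i) gives $f_i\le 0$ and hence $g_i\ge 0$ on $\{x_i=+1\}$. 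These are exactly the sign conditions required.

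The one mild obstacle I anticipate is that the version of Poincar\'e-Miranda reproduced in the excerpt is phrased with strict signs (``constantly negative/positive''), whereas the lemma's hypotheses only yield the non-strict $f_i\ge 0$ and $f_i\le 0$. I would handle this by the standard perturbation argument: for each $\delta>0$, set $g_i^{\delta}(\mathbf{x}):=g_i(\mathbf{x})+\delta x_i$, which satisfies $g_i^{\delta}\le-\delta<0$ on $\{x_i=-1\}$ and $g_i^{\delta}\ge\delta>0$ on $\{x_i=+1\}$. Applying the strict version yields a common zero $\mathbf{x}^{*,\delta}\in[-1,1]^N$; by compactness of the cube I can extract a convergent subsequence as $\delta\to 0$, and continuity of $g_i$ forces the limit point to be a simultaneous zero of $g_1,\dots,g_N$. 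The pre-image in $\bar{\mathcal{B}}$ is the desired equilibrium. This passage to the limit is the only step needing any care; the rest is a clean unpacking of definitions plus the affine normalization.
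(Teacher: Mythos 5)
Your proof is correct and follows the same route as the paper's: a direct application of the Poincar\'e--Miranda theorem over the box $\bar{\mathcal{B}}$ (the paper skips the affine rescaling and simply applies the theorem to $\bar{\mathcal{B}}$ with $f_i$ as the test functions, after a harmless sign flip). The only substantive difference is that the paper dismisses the strict-vs.-non-strict inequality issue in a one-line footnote, whereas you supply the full $\delta$-perturbation and compactness argument to pass to the limit; that is a welcome strengthening of the same idea rather than a different approach.
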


\begin{proof}
We can apply the Poincare-Miranda Theorem over the cube $\bar{\mathcal{B}}:=\times_{i=1}^N [\underline{b}_i,\bar{b}_i]$ and taking $g_i:=f_i$ in \eqref{eqn:ODE}, which ensures at least one $\mathbf{q}$ such that $\mathbf{f}(\mathbf{q})=\mathbf{0}$.\footnote{In the proof, we do not follow the condition \emph{constantly negative/positive}, instead we consider \emph{constantly non-positive/non-negative}. This does not affect our result as we consider the existence of solution in a closed cube.}
\end{proof}

Lemma \ref{prop:single3} is a general result {for the existence of an equilibrium point} in which $\bar{\mathcal{B}}$ requires specification for a particular policy. We consider the policy \eqref{eqn:DiffBacklogBP-Single} as an example to show how we can obtain $\bar{\mathcal{B}}$ in Appendix \ref{sec:lemma}. Combining Theorem \ref{prop:single2} and Lemma \ref{prop:single3}, we have the following theorem for queue length stability.




\begin{theorem}
\label{thm:single-commodity}
For any local policy that satisfies the conditions in both Theorem \ref{prop:single2} and Lemma \ref{prop:single3}, there exists a unique stable equilibrium point in $\bar{\mathcal{B}}$ defined in Lemma \ref{prop:single3}, and the system is queue length stable (and thus rate stable).
\end{theorem}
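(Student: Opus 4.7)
The plan is to stitch together Lemma \ref{prop:single3} and Theorem \ref{prop:single2} and then convert the resulting ODE statement into queue length language. First, applying Lemma \ref{prop:single3} to the given policy produces at least one equilibrium point $\mathbf{q}^*$ in the closed cube $\bar{\mathcal{B}}$. Second, since the gradient inequalities \eqref{eqn:local_stability} hold at every $\mathbf{q}\in\mathcal{Q}$ by hypothesis, Theorem \ref{prop:single2} rules out any second equilibrium and forces $\mathbf{q}^*$ to be asymptotically stable. These two observations together settle the first half of the statement.

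For queue length stability, I would next argue that $\bar{\mathcal{B}}$ is forward invariant under \eqref{eqn:local-policy}. The hypotheses of Lemma \ref{prop:single3} state precisely that $f_i(\mathbf{q})\leq 0$ on the face $q_i=\bar{b}_i$ and $f_i(\mathbf{q})\geq 0$ on the face $q_i=\underline{b}_i$, provided $q_j\in[\underline{b}_j,\bar{b}_j]$ for $j\neq i$. Hence the vector field on $\partial\bar{\mathcal{B}}$ points inward and any trajectory starting in $\bar{\mathcal{B}}$ remains there, immediately giving $\limsup_{t\to\infty}q_i(t)\leq\bar{b}_i<\infty$ for every $i$.

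For initial conditions outside $\bar{\mathcal{B}}$, the internal-node coordinates are automatically trapped in $[0,b_i]$ by the no-overflow constraint built into the policy, so the only remaining concern is a source-node coordinate that starts above its $\bar{b}_i$. The monotonicity inequalities in \eqref{eqn:local_stability} make the net outflow strictly increasing in the source coordinate $q_i$, so once $q_i$ rises far enough above $\bar{b}_i$ the right-hand side $f_i$ becomes strictly negative, which drives the trajectory back into $\bar{\mathcal{B}}$ in finite time. Combining this with forward invariance yields queue length stability, and rate stability then follows from the standard observation that a bounded source-queue backlog under a fixed arrival rate forces the time-averaged departure rate to equal the arrival rate.

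The main obstacle is the last paragraph: the asymptotic stability conclusion of Theorem \ref{prop:single2} is only a local notion near $\mathbf{q}^*$, so one cannot directly upgrade it to global attraction from the Jacobian analysis alone. The cleanest way around this is to use $\bar{\mathcal{B}}$ as a trapping region and to bound the dynamics on its complement using the monotonicity in \eqref{eqn:local_stability}; if one wanted the stronger conclusion $\mathbf{q}(t)\to\mathbf{q}^*$ rather than mere boundedness, a Lyapunov candidate such as $V(\mathbf{q})=\sum_i(q_i-q_i^*)^2$, combined with the column-diagonal-dominance structure of $\mathbf{J}$ exploited in Theorem \ref{prop:single1}, would be the natural tool.
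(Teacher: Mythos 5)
Your first paragraph reproduces the paper's own proof verbatim: the paper simply combines Lemma \ref{prop:single3} (at least one equilibrium in $\bar{\mathcal{B}}$) with Theorem \ref{prop:single2} (at most one, and it is asymptotically stable) and then asserts, without further argument, that ``starting at any queue length vector, the dynamics will converge to the equilibrium.'' You are right to be uncomfortable with that last clause: a unique, \emph{locally} asymptotically stable equilibrium does not by itself rule out trajectories escaping to infinity along the unbounded source coordinates, so the paper's proof is terser than the claim warrants. Your second and third paragraphs are an honest attempt to fill that gap, and they are content the paper does not contain.

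The forward-invariance observation is the right move and is essentially free: the sign conditions in Lemma \ref{prop:single3} on each face $q_i=\bar{b}_i$ and $q_i=\underline{b}_i$ are exactly the Nagumo-type inward-pointing conditions on $\partial\bar{\mathcal{B}}$, so $\bar{\mathcal{B}}$ is a trapping region and trajectories starting inside it stay bounded, which already gives queue length stability for those initial conditions. Your argument for initial conditions outside $\bar{\mathcal{B}}$ is the right ingredient but stated loosely. To tighten it, note that under \eqref{eqn:local_stability} one has, for every $i$,
\[
\frac{\partial f_i}{\partial q_i}
= \sum_{k:(k,i)\in\mathcal{E}} \frac{\partial g_{ki}}{\partial q_i}
- \sum_{j:(i,j)\in\mathcal{E}} \frac{\partial g_{ij}}{\partial q_i}
- \frac{\partial g_{iT}}{\partial q_i} < 0,
\]
so $f_i$ is strictly decreasing in its own coordinate. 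Since all internal coordinates are confined to $[0,b_i]$ by the no-overflow constraint, and on those faces $f_i(\mathbf{q})\leq 0$ once $q_i=\bar{b}_i$ with the remaining coordinates in range, monotonicity forces $f_i<0$ whenever a source coordinate exceeds $\bar{b}_i$. One still has to handle the transient where several source coordinates are simultaneously out of range, but the monotone structure (the system is cooperative in the off-diagonal entries relevant here) gives you the comparison argument needed to close that case. In short: you reproduce the paper's core, correctly diagnose the unproven global-attraction step, and sketch a trapping-region fix that the paper should arguably have included; the only remaining work is to make the out-of-$\bar{\mathcal{B}}$ case fully rigorous.
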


\begin{proof}
Theorem \ref{prop:single2} ensures there exists at most one stable equilibrium point, while Lemma \ref{prop:single3} ensures there exists at least one equilibrium point. Therefore there exists a unique stable equilibrium point and thus starting at any queue length vector, the dynamics will converge to the equilibrium.
\end{proof}

Theorem \ref{thm:single-commodity} can be viewed as an extension to Theorem \ref{prop:single2} which only adds a sufficient condition for equilibrium point existence. The above stability results differ from previous works in that (i) they capture a set of policies, and (ii) they can be applied in systems with arbitrary buffer settings.

\section{Multi-Commodity System}
\label{sec:MultiCommodityStability}

We extend the above results to multi-commodity systems, where different commodities are coupled due to shared links or buffers. We identify a sufficient condition for queue length stability that is similar to the single-commodity case, but with an additional condition on the coupling among different commodities.

\subsection{Basic Setting}

Suppose that the system consists of $C$ commodities.
We use $q_i^{(\ell)}(t)$ to denote the queue length of commodity $\ell$ at node $i$ and time $t$, 
and an $N_{\ell}\times 1$ vector $\mathbf{q}^{(\ell)}(t)$ to denote the queue length vector for commodity $\ell$, where $N_{\ell}$ denotes the number of nodes on the available paths of commodity $\ell$. We denote vector $\mathbf{q}(t)$ that concatenates $\{\mathbf{q}^{(\ell)}(t)\}_{\ell=1}^C$ as the queue length vector for the entire network.
For commodity $\ell$, we denote the arrival rate at node $i$ as $\lambda_i^{(\ell)}(t)$, the transmission rate on link $(i,j)$ as $g_{ij}^{(\ell)}(t)$, and the departure rate to outside the networks at node $i$ as $g_{iT_{\ell}}^{(\ell)}(t)$, where $T_{\ell}$ denotes the meta destination connected for commodity $\ell$.




We also consider local, stationary policies in the multi-commodity case with no overflow permitted, namely the same conditions as for the single-commodity systems.
The queueing dynamics under a local policy for any commodity $\ell$ at any node $i$ is given by
\begin{equation}
\label{eqn:multi-equation}
\begin{aligned}
\dot{q}_i^{(\ell)}&=\lambda_i^{(\ell)}+\sum_{k:(k,i)\in \mathcal{E}} g_{ki}^{(\ell)}(\{q_k^{(p)}\}_{p=1 }^{C},\{q_i^{(p)}\}_{p=1}^{C})
\\&-\sum_{j:(i,j)\in \mathcal{E}} g_{ij}^{(\ell)}(\{q_i^{(p)}\}_{p=1}^C,\{q_j^{(p)}\}_{p=1}^C)
-g_{iT}^{(\ell)}(\{q_i^{(p)}\}_{p=1}^{C}).
\end{aligned}
\end{equation}
%

\subsection{Stability Analysis}
\label{sec:stability-multi}

We follow a similar pattern to the single-commodity case. We first derive conditions for local queue length stability, and extend to a global sufficient condition that ensures an equilibrium point $\mathbf{q}^*$ to be globally unique and stable, which captures a set of local policies that can achieve queue length stability for all commodities. 

To study the stability at $\mathbf{q}^*$, we need to first introduce an important concept of block diagonally dominant matrix.
\begin{definition}
A matrix $\mathbf{J}$ is called a \emph{block diagonally dominant matrix} if $\mathbf{J}$ can be partitioned into the following $C\times C$ blocks
\begin{equation}
\label{eqn:blockmatrix}
\mathbf{J}=
\begin{bmatrix}
\mathbf{J}_{1,1} & \mathbf{J}_{1,2} & \cdots & \mathbf{J}_{1,C} \\
\mathbf{J}_{2,1} & \mathbf{J}_{2,2} & \cdots & \mathbf{J}_{2,C} \\
\vdots & \vdots & \ddots & \vdots \\
\mathbf{J}_{C,1} & \mathbf{J}_{C,2} & \cdots & \mathbf{J}_{C,C}
\end{bmatrix}
\end{equation}
where $\mathbf{J}_{i,j}\in \mathbb{R}^{N_i\times N_j}$, the diagonal submatrices $\{\mathbf{J}_{i,i}\}_{i=1}^C$ are nonsingular, and for some operator norm $||\cdot||$,
$||\mathbf{J}_{j,j}^{-1}||^{-1}\geq \sum_{k=1,k\neq j}^{C} ||\mathbf{J}_{j,k}||,~\forall j = 1,\dots,C$. $\mathbf{J}$ is called a block strictly diagonally dominant matrix if the last condition is a strict inequality.
\end{definition}

This definition is directly related to the system \eqref{eqn:multi-equation}:
Each block on the diagonal ($\mathbf{J}_{\ell,\ell}$) represents the derivative of the dynamics of commodity $\ell$ with respect to the $q_i^{(\ell)}$ at each node $i$, while each off-diagonal block ($\mathbf{J}_{\ell^{\prime},\ell}$) denotes its derivative with respect to $q_i^{(\ell^{\prime})}$ for commodity $\ell^{\prime}\neq \ell$, which reflects the coupling between commodities $\ell$ and $\ell^{\prime}$. Intuitively, block diagonally dominance requires the total coupling effect of each commodity $\ell$ (i.e. $\sum_{k=1,k\neq \ell}^{C} ||\mathbf{J}_{\ell,k}||$)  to be relatively small.
%

Before proving our results based on the block diagonally dominant matrix, we need to introduce the concepts of M-matrix and absolute norm, and a theorem \cite{feingold1962block} about matrix eigenvalues.
\begin{definition}
A matrix is called an M-matrix if all of its eigenvalues have nonnegative real parts and all its off-diagonal entries are nonpositive.
\end{definition}

\begin{definition}
A norm $||\cdot||$ is an {absolute norm} if $||\mathbf{q}||=||~|\mathbf{q}|~||$, where $|\mathbf{q}|$ denotes the element-wise absolute value.
\end{definition}

{\small
\textbf{Theorem:} \cite{feingold1962block} Let matrix $\mathbf{J}$ be partitioned as in \eqref{eqn:blockmatrix}, and let $\mathbf{J}$ be block strictly diagonally dominant. Further, suppose that $-\mathbf{J}_{j,j}$ is an M-matrix, $\forall 1\leq j\leq C$, and the norm is an absolute norm. The any eigenvalue of $\mathbf{J}$
has negative real part.}

The theorem informs that we need to ensure three points to prove that an equilibrium point is asymptotically stable: (i) $\mathbf{J}$ is block strictly diagonally dominant; (ii) $-\mathbf{J}_{\ell,\ell}$ is an M-matrix, $\forall \ell = 1,\dots,C$; (iii) the norm is an absolute norm.
The last point obviously holds when we apply the $l_2$-norm. We have the following lemma for (i) and (ii).

\begin{lemma}
\label{prop:multi1}
Suppose that the Jacobian matrix $\mathbf{J}$ of \eqref{eqn:multi-equation} at an equilibrium point $\mathbf{q}^*$ under the local policy satisfies
\begin{itemize}
\item Block strictly diagonal dominance:
$$\sqrt{\lambda_{\min}(\mathbf{J}_{\ell,\ell}\mathbf{J}_{\ell,\ell}^T)}> \sum_{p=1,p\neq \ell}^C\sqrt{\lambda_{\max}(\mathbf{J}_{p,\ell}^T\mathbf{J}_{p,\ell})}$$
for $\forall \ell=1,\dots,C$.
\item M-matrix condition: For $\forall (i,j)\in\mathcal{E}$ and $\forall \ell=1,\dots,C$,
$\frac{\partial g_{ij}^{(\ell)}}{\partial q_i^{(\ell)}}> 0,~\frac{\partial g_{ij}^{(\ell)}}{\partial q_j^{(\ell)}}< 0$, and
$\frac{\partial g_{iT}^{(\ell)}}{\partial q_i^{(\ell)}}> 0,~\exists i\in \mathcal{V}$.
\end{itemize}
then the equilibrium point $\mathbf{q}^*$ is asymptotically stable.
\end{lemma}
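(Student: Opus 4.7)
The plan is to invoke the Feingold--Varga theorem stated just above the lemma: if $\mathbf{J}$ is block strictly diagonally dominant in some absolute operator norm and each $-\mathbf{J}_{\ell,\ell}$ is an M-matrix, then every eigenvalue of $\mathbf{J}$ has negative real part, which is the linearization criterion for asymptotic stability of $\mathbf{q}^*$ already used in Theorem \ref{prop:single1}. I will take the operator norm induced by the Euclidean vector norm, which is absolute, so the verification reduces to two items: (a) block strict diagonal dominance in the $l_2$ norm, and (b) the M-matrix property of each $-\mathbf{J}_{\ell,\ell}$.

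For (a), I translate the hypothesis into operator-norm form. Since $\mathbf{J}_{\ell,\ell}$ is square, $\|\mathbf{J}_{\ell,\ell}^{-1}\|_2^{-1}$ equals its minimum singular value $\sqrt{\lambda_{\min}(\mathbf{J}_{\ell,\ell}\mathbf{J}_{\ell,\ell}^T)}$, and for any block $\mathbf{J}_{p,\ell}$ the induced $l_2$ norm equals $\sqrt{\lambda_{\max}(\mathbf{J}_{p,\ell}^T\mathbf{J}_{p,\ell})}$. Thus the first hypothesis of the lemma is literally
\[
\|\mathbf{J}_{\ell,\ell}^{-1}\|_2^{-1} > \sum_{p\neq \ell}\|\mathbf{J}_{p,\ell}\|_2, \qquad \forall\,\ell,
\]
which is block strict diagonal dominance in the required norm; the positivity of the left-hand side also furnishes nonsingularity of $\mathbf{J}_{\ell,\ell}$ as a free by-product.

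For (b), I reuse the Jacobian computation of Theorem \ref{prop:single1} restricted to commodity $\ell$. Differentiating \eqref{eqn:multi-equation} with respect to $q_u^{(\ell)}$ reproduces the structural formula \eqref{eqn:Jacobian_single}: for $i\neq u$, the $(i,u)$ entry of $\mathbf{J}_{\ell,\ell}$ is $-\partial g_{iu}^{(\ell)}/\partial q_u^{(\ell)}$ when $(i,u)\in\mathcal{E}$, $\partial g_{ui}^{(\ell)}/\partial q_u^{(\ell)}$ when $(u,i)\in\mathcal{E}$, and $0$ otherwise, while the diagonal entry is $\sum_{k:(k,u)\in\mathcal{E}}\partial g_{ku}^{(\ell)}/\partial q_u^{(\ell)} - \sum_{j:(u,j)\in\mathcal{E}}\partial g_{uj}^{(\ell)}/\partial q_u^{(\ell)} - \partial g_{uT}^{(\ell)}/\partial q_u^{(\ell)}$. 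Under the M-matrix sign conditions, the off-diagonal entries of $\mathbf{J}_{\ell,\ell}$ are nonnegative (so those of $-\mathbf{J}_{\ell,\ell}$ are nonpositive), and the diagonal entries are strictly negative and column-dominate the off-diagonals in absolute value, exactly as in the single-commodity proof. Gershgorin applied to $-\mathbf{J}_{\ell,\ell}$ then places every eigenvalue in the closed right half-plane, giving the M-matrix property as defined in the paper.

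With (a), (b), and absoluteness of the $l_2$ norm all in hand, the Feingold--Varga theorem concludes the proof. I expect the main subtlety to be the norm matching in (a): the cited theorem allows any absolute operator norm, so one must verify that the eigenvalue expressions in the hypothesis really are the $l_2$ operator norms of the blocks, rather than a Frobenius or row-sum surrogate. The singular-value identities above make this translation tight, and the remaining verifications are direct transplants from the single-commodity analysis.
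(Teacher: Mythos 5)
Your proposal is correct and matches the paper's own proof: translate the eigenvalue hypothesis into the $l_2$ operator-norm form of block strict diagonal dominance via the singular-value identities, verify that each $-\mathbf{J}_{\ell,\ell}$ is an M-matrix by reusing the column-diagonal-dominance/Gershgorin argument from Theorem \ref{prop:single1}, and apply the Feingold--Varga result. The only difference is that you spell out the M-matrix verification and the nonsingularity observation in more detail, whereas the paper compresses that step into a one-line remark.
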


\begin{proof}
The block strictly diagonal dominance constraint is derived based on constraint $||\mathbf{J}_{s,s}^{-1}||^{-1}\geq \sum_{t=1,t\neq s}^{C} ||\mathbf{J}_{t,s}||$ under $l_2$-norm. Specifically,
$$
\begin{aligned}
||\mathbf{J}_{s,s}^{-1}||&=\sqrt{\lambda_{\max}((\mathbf{J}_{s,s}^T)^{-1}\mathbf{J}_{s,s}^{-1})}
=\sqrt{\lambda_{\max}(\mathbf{J}_{s,s}\mathbf{J}_{s,s}^T)^{-1}}
\\&=\left(\sqrt{\lambda_{\min}(\mathbf{J}_{s,s}\mathbf{J}_{s,s}^T)}\right)^{-1}.
\end{aligned}
$$
Therefore $||\mathbf{J}_{s,s}^{-1}||^{-1}=\sqrt{\lambda_{\min}(\mathbf{J}_{s,s}\mathbf{J}_{s,s}^T)}$ while the RHS $\sum_{t=1,t\neq s}^{C} ||\mathbf{J}_{t,s}||=\sum_{t=1,t\neq s}^{C} \sqrt{\lambda_{\max}(\mathbf{J}_{t,s}^T\mathbf{J}_{t,s})}$ based on the definition of $l_2$-norm.
The M-matrix condition can be similarly proved as Theorem \ref{prop:single1} to show that under the condition, $-\mathbf{J}_{\ell,\ell}$ is an M-matrix. Then all the eigenvalues of $\mathbf{J}$ have negative real parts based on \cite{feingold1962block}, and thus $\mathbf{q}^*$ is asymptotically stable.
\end{proof}

Compared with Theorem \ref{prop:single1} for single-commodity systems, the \emph{block strictly diagonal dominance} in Lemma \ref{prop:multi1} is an additional condition for the restriction of coupling level among different commodities, while for each commodity, the \emph{M-matrix condition} coincides with the conditions \eqref{eqn:local_stability}.
In fact, in the case that different commodities do not affect each other,
the block strictly diagonal dominance holds naturally as all the off-diagonal blocks are zero matrices, and thus Lemma \ref{prop:multi1} is reduced to a $C$-fold version of Theorem \ref{prop:single1}.



Similar to the single-commodity case, we can obtain a sufficient condition of a local policy such that \eqref{eqn:multi-equation} has a unique stable equilibrium point, reducing the stability problem to testing the existence of an equilibrium point. 

%
%

\begin{theorem}
\label{prop:multi2}
Suppose that the conditions in Lemma \ref{prop:multi1} hold for any feasible $\mathbf{q}$, then there either exists a unique asymptotically stable equilibrium point for \eqref{eqn:multi-equation} or there does not exist any equilibrium point. 
\end{theorem}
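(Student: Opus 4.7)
The plan is to mirror the proof of Theorem \ref{prop:single2} almost verbatim, leveraging Lemma \ref{prop:multi1} in the role that Theorem \ref{prop:single1} played in the single-commodity case. The hypothesis of Theorem \ref{prop:multi2} is exactly that the block strictly diagonal dominance and the M-matrix conditions of Lemma \ref{prop:multi1} hold at every feasible queue length vector $\mathbf{q}\in\mathcal{Q}$, so in particular they hold at \emph{any} equilibrium point. This immediately upgrades Lemma \ref{prop:multi1} from a local statement at a fixed $\mathbf{q}^*$ to a uniform statement: every equilibrium of \eqref{eqn:multi-equation} is asymptotically stable and hence hyperbolic, with all Jacobian eigenvalues having strictly negative real parts.

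From here, I would proceed by contradiction. Suppose \eqref{eqn:multi-equation} has two or more distinct equilibrium points. Then, following the same reasoning as in Theorem \ref{prop:single2}, the ODE system would have to contain at least one equilibrium $\tilde{\mathbf{q}}$ that is not asymptotically stable (since hyperbolic sinks are isolated and have disjoint open basins of attraction, the coexistence of more than one equilibrium with only sinks among them cannot be sustained in a connected feasible domain; a topological index / basin-boundary argument produces an equilibrium whose Jacobian has at least one eigenvalue with non-negative real part). But applying Lemma \ref{prop:multi1} at $\tilde{\mathbf{q}}$ yields that $\tilde{\mathbf{q}}$ \emph{is} asymptotically stable, giving the required contradiction. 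Consequently, at most one equilibrium exists, and by Lemma \ref{prop:multi1} this equilibrium, if present, is asymptotically stable. The dichotomy stated in the theorem follows.

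The main obstacle is the step asserting that multiple equilibria force at least one to be non-stable. The paper invokes this implicitly in Theorem \ref{prop:single2}, and I would handle it analogously here, simply noting that since all candidate equilibria would be hyperbolic sinks of the same smooth vector field $\mathbf{f}$ on the connected feasible set $\mathcal{Q}=\times_{\ell,i}[0,b_i^{(\ell)}]$, standard Poincar\'e--Hopf / basin-boundary considerations force an additional equilibrium with at least one Jacobian eigenvalue of non-negative real part, contradicting the uniform validity of Lemma \ref{prop:multi1}. Everything else in the proof is a direct transcription of the single-commodity argument, with the single-commodity diagonal dominance replaced by the block version and the scalar stability criterion replaced by the M-matrix plus block dominance criterion from Lemma \ref{prop:multi1}.
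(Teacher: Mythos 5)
Your proposal mirrors what the paper intends for Theorem \ref{prop:multi2} (which is stated without proof), namely a verbatim transcription of the contradiction argument used for Theorem \ref{prop:single2} with Lemma \ref{prop:multi1} in the role of Theorem \ref{prop:single1}; this is indeed the paper's approach. You correctly isolate the step that both you and the paper leave unproved --- that several equilibria, all hyperbolic sinks, cannot coexist --- and your Poincar\'e--Hopf / basin-boundary sketch is the right kind of justification, though as stated it still quietly assumes the feasible set is compact and the vector field points inward on its boundary (and that no limit cycles sit on basin boundaries), hypotheses the paper never verifies and which are delicate at source nodes with unbounded buffers.
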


\subsection{Existence of Equilibrium Point}
\label{subsec:equilibrium}

In terms of the existence of an equilibrium point, we similarly apply the Poincare-Miranda Theorem. However since it can only capture cube form regions of $\mathbf{q}$, we can only obtain results for systems with per-commodity buffers, where node $i$ allocates a portion of its buffer to each commodity $\ell$, with length denoted as $b_i^{(\ell)}$, which satisfies $q_{i}^{(\ell)}\leq b_i^{(\ell)}$ and $\sum_{p=1}^C b_i^{(p)}\leq b_i$. For systems with shared buffers, the constraint is $\sum_{p=1}^C q_i^{(p)}\leq b_i$, not a cube, hence this theorem is not applicable. Systems with shared buffers will be discussed in Section \ref{subsec:shared-buffer}.

\begin{lemma}
\label{prop:multi3}
For every commodity $\ell$, under the condition that there exists finite values $\{\bar{b}_j^{(\ell)}\}_{i=1}^N$ and $\{\underline{b}_j^{(\ell)}\}_{i=1}^N$ such that for every node $i$, and $q_j^{(\ell)}\in[\underline{b}_j^{(\ell)},\bar{b}_j^{(\ell)}],~\forall j\neq i$: (i) when $q_i^{(\ell)}=\bar{b}_i^{(\ell)}$, $f_i^{(\ell)}(\mathbf{q})\leq 0$; (ii) when $q_i^{(\ell)}=\underline{b}_i^{(\ell)}$, $f_i^{(\ell)}(\mathbf{q})\geq 0$. There exists a feasible equilibrium point $\mathbf{q}\in \bar{\mathcal{B}}:=\times_{i=1}^N \times_{p=1}^C [\underline{b}_i^{(p)},\bar{b}_i^{(p)}]$ for system \eqref{eqn:multi-equation}.

\end{lemma}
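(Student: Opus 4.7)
The plan is to mirror the single-commodity argument of Lemma \ref{prop:single3} by invoking the Poincare-Miranda Theorem on the higher-dimensional box $\bar{\mathcal{B}} = \times_{i=1}^{N} \times_{p=1}^{C} [\underline{b}_i^{(p)}, \bar{b}_i^{(p)}]$. The total dimension is $n = \sum_{p=1}^{C} N_p$. After an affine reparameterization of each coordinate, $\bar{\mathcal{B}}$ is homeomorphic to the standard cube $[-1,1]^n$, so the theorem applies in the form stated in the paper. First I would reindex the coordinates as pairs $(i,\ell)$ with $i\in\mathcal{V}$ and $\ell\in\{1,\dots,C\}$, and for each such pair define $g_{(i,\ell)}(\mathbf{q}) := f_i^{(\ell)}(\mathbf{q})$, where $f_i^{(\ell)}$ is the right-hand side of \eqref{eqn:multi-equation}. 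Each $f_i^{(\ell)}$ is continuous, being a sum of the first-order differentiable rate functions assumed throughout the paper.

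Second, I would verify the sign hypothesis of Poincare-Miranda on each pair of opposite faces of $\bar{\mathcal{B}}$. Fix a coordinate $(i,\ell)$. The ``upper'' face associated with this coordinate is the set of points where $q_i^{(\ell)} = \bar{b}_i^{(\ell)}$ while every remaining coordinate $q_j^{(p)}$ with $(j,p)\neq (i,\ell)$ ranges in $[\underline{b}_j^{(p)},\bar{b}_j^{(p)}]$. Hypothesis (i) of the lemma directly yields $f_i^{(\ell)}(\mathbf{q})\leq 0$ on this face, and symmetrically hypothesis (ii) yields $f_i^{(\ell)}(\mathbf{q})\geq 0$ on the opposite face $q_i^{(\ell)}=\underline{b}_i^{(\ell)}$. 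As noted in the proof of Lemma \ref{prop:single3}, using non-strict inequalities is harmless because the theorem is being applied on a closed box and we merely seek a zero. Poincare-Miranda then furnishes a point $\mathbf{q}^*\in\bar{\mathcal{B}}$ at which all $n$ functions $f_i^{(\ell)}$ vanish simultaneously, i.e.\ $\dot q_i^{(\ell)}=0$ for every commodity and every node, which is by definition an equilibrium point of \eqref{eqn:multi-equation}.

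The only subtlety I anticipate is notational rather than technical: the hypotheses must be read as implicitly quantified over the remaining coordinates $q_j^{(p)}$ with $p\neq \ell$ as well, not only over coordinates of the same commodity, because $f_i^{(\ell)}$ genuinely depends on every $q_j^{(p)}$ by \eqref{eqn:multi-equation}. Under that reading, the multi-commodity hypothesis is exactly the Poincare-Miranda hypothesis on the cube $\bar{\mathcal{B}}$, and no new obstacle arises beyond the single-commodity case. I would also remark that the cube structure is essential: the argument does not transfer to shared-buffer systems, whose feasible region $\{q:\sum_p q_i^{(p)}\leq b_i\}$ is a simplex-constrained set, and hence those systems are deferred to Section \ref{subsec:shared-buffer}.
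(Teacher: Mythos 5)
Your proof matches the paper's approach: the paper's own proof of Lemma~\ref{prop:multi3} is a single sentence asserting that Lemma~\ref{prop:single3} extends by applying the Poincare--Miranda theorem coordinate-by-coordinate over the product cube $\bar{\mathcal{B}}$, and you have filled in exactly that argument. Your remark that the hypothesis must be read as implicitly quantified over all remaining coordinates $q_j^{(p)}$ with $(j,p)\neq(i,\ell)$ (not merely over $q_j^{(\ell)}$ for $j\neq i$, which is all the lemma's wording literally covers) is correct and is in fact necessary for the sign condition to hold on the full faces of the cube where Poincare--Miranda requires it.
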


\begin{proof}
The proof is a simple extension of Lemma \ref{prop:single3} by applying its idea for each commodity $\ell$.
\end{proof}
%

Note that Lemma \ref{prop:multi3} only decouples different commodities at node buffers, which means it still applies when different commodities affect each other's transmission rates on shared links. Combining Lemma \ref{prop:multi3} with Theorem \ref{prop:multi2}, we obtain the following result regarding queue length stability for all commodities. The proof is similar to Theorem \ref{thm:single-commodity}.

\begin{theorem}
\label{thm:multi}
For policies under the conditions of Theorem \ref{prop:multi2} and Lemma \ref{prop:multi3}, there exists a globally unique stable equilibrium point for \eqref{eqn:multi-equation}, which guarantees queue length stability for all commodities.
\end{theorem}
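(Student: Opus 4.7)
The plan is to mirror the structure of the proof of Theorem~\ref{thm:single-commodity}, combining the existence result of Lemma~\ref{prop:multi3} with the conditional uniqueness/stability result of Theorem~\ref{prop:multi2}. Since Theorem~\ref{prop:multi2} shows that the dynamics \eqref{eqn:multi-equation} admits \emph{either} a unique asymptotically stable equilibrium \emph{or} no equilibrium at all, the multi-commodity theorem reduces to ruling out the second alternative.

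First I would invoke Lemma~\ref{prop:multi3}. The hypotheses of Theorem~\ref{thm:multi} include those of the lemma, which supplies, for each commodity $\ell$, a rectangular region $\times_{i=1}^N [\underline{b}_i^{(\ell)},\bar{b}_i^{(\ell)}]$ on whose boundary the components $f_i^{(\ell)}$ have the correct signs. Taking the Cartesian product across all commodities gives the box $\bar{\mathcal{B}}$, and applying Poincar\'e--Miranda in this combined space produces at least one equilibrium point $\mathbf{q}^\ast \in \bar{\mathcal{B}}$ with $\mathbf{f}(\mathbf{q}^\ast)=\boldsymbol{0}$.

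Second, I would apply Theorem~\ref{prop:multi2}. Since the block strictly diagonal dominance condition and the M-matrix condition from Lemma~\ref{prop:multi1} are assumed to hold globally, Theorem~\ref{prop:multi2} forbids the coexistence of two equilibria and forces any equilibrium to be asymptotically stable. Together with Step~1, this yields a \emph{unique} asymptotically stable equilibrium point $\mathbf{q}^\ast$ in $\bar{\mathcal{B}}$. To conclude queue length stability, I would note that the trajectory of \eqref{eqn:multi-equation} stays bounded: by the sign conditions at the boundary of $\bar{\mathcal{B}}$ invoked in Lemma~\ref{prop:multi3}, $\bar{\mathcal{B}}$ is forward-invariant, and any trajectory entering (or starting in) this box converges to $\mathbf{q}^\ast$. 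For the source-node coordinates, boundedness of the limiting point gives $\limsup_{t\to\infty} q_i^{(\ell)}(t) < \infty$, which is queue length stability per Definition~\ref{def:stable}.

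The main obstacle I anticipate is the same subtlety that is implicit in Theorem~\ref{thm:single-commodity}: passing from \emph{local} asymptotic stability at $\mathbf{q}^\ast$ to \emph{global} convergence of trajectories. The argument hinges on (i) forward-invariance of $\bar{\mathcal{B}}$, which follows from the inward-pointing sign conditions of Lemma~\ref{prop:multi3} combined with the no-overflow constraint enforcing $g_{ij}^{(\ell)}=0$ at saturated buffers, and (ii) the fact that, since $\mathbf{q}^\ast$ is the \emph{only} equilibrium in $\bar{\mathcal{B}}$, no other $\omega$-limit set is possible for a bounded trajectory in a gradient-like dissipative system. Once these two points are established, global convergence to $\mathbf{q}^\ast$ follows, and queue length stability for all commodities is immediate.
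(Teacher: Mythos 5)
Your overall structure matches the paper's exactly: existence of an equilibrium in the product box via Lemma~\ref{prop:multi3} (Poincar\'e--Miranda applied commodity by commodity), combined with uniqueness and asymptotic stability via Theorem~\ref{prop:multi2}, in direct analogy with the paper's proof of the single-commodity Theorem~\ref{thm:single-commodity}. Up to this point the proposal and the paper coincide, and this is all the paper itself offers, stating simply that the proof is similar to that of Theorem~\ref{thm:single-commodity}.

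Where you go beyond the paper is in trying to actually justify the jump from a unique, \emph{locally} asymptotically stable equilibrium to \emph{global} convergence (and hence queue length stability). You are right that this jump is nontrivial and that the forward-invariance of $\bar{\mathcal{B}}$ is one ingredient (the inward-pointing sign conditions of Lemma~\ref{prop:multi3} do give forward-invariance). However, your step~(ii) as written does not close the gap: you assert that ``since $\mathbf{q}^\ast$ is the only equilibrium in $\bar{\mathcal{B}}$, no other $\omega$-limit set is possible for a bounded trajectory in a gradient-like dissipative system,'' but nothing in the hypotheses of Lemma~\ref{prop:multi1} or Theorem~\ref{prop:multi2} establishes that \eqref{eqn:multi-equation} is gradient-like. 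For a general ODE in dimension $\sum_\ell N_\ell \ge 3$, a compact forward-invariant set with a single hyperbolic equilibrium can still support periodic orbits or more complicated recurrent behavior, so uniqueness of the equilibrium alone does not force convergence. To make this step rigorous one would need either a genuine global Lyapunov function on $\bar{\mathcal{B}}$ or a structural property (e.g., a monotone/cooperative system argument) ruling out non-trivial recurrence; neither is supplied. To be fair, the paper's own one-line proof of Theorem~\ref{thm:single-commodity} commits the same leap without comment, so you have not weakened the argument --- you have merely surfaced, but not filled, a hole that is already present in the paper.
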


\subsection{Shared buffer Case Study: Switched Networks}
\label{subsec:shared-buffer}

In networks with shared buffers, one commodity may fully occupy a shared buffer and thus squeeze out other commodities, which may induce the instability of these commodities. While we have not been able to obtain results for general networks under this setting, we can obtain explicit results over single-hop network structure as Fig.~\ref{fig:TwoNodeToyExample}, {which serves as the basic structure for server farms \cite{li2014dynamic} and switches in data center networks \cite{al2008scalable}.}

Consider the policy based on backpressure \eqref{eqn:DiffBacklogBP-Single} for the system in Fig.~\ref{fig:TwoNodeToyExample}. Let
$$
\begin{cases}
\alpha_{ij}^{(\ell)}=\frac{1}{1+e^{-a(q_i^{(\ell)}-q_j^{(\ell)}-\epsilon)}},~\forall (i,j)\in \mathcal{E},~\ell=1,2 \\
\beta_K=\frac{1}{1+e^{-a(b_K-q_K^{(1)}-q_K^{(2)}-\epsilon)}},
\end{cases}
$$
where $a\rightarrow \infty$ and $\epsilon:=1/\sqrt{a}\rightarrow 0$, and we can then write the queueing dynamics as
\begin{subnumcases}
\small
\dot{q}_1^{(1)} =\lambda_1-c_{1K}\alpha_{1K}^{(1)}\beta_K,~\dot{q}_K^{(1)}=c_{1K}\alpha_{1K}^{(1)}\beta_K-\mu_1 \alpha_{KT}^{(1)} \label{eqn:systemexp1}\\
\dot{q}_2^{(2)} =\lambda_2-c_{2K}\alpha_{2K}^{(2)}\beta_K,~\dot{q}_K^{(2)}=c_{2K}\alpha_{2K}^{(2)}\beta_K-\mu_2 \alpha_{KT}^{(2)}
\label{eqn:systemexp2}
\end{subnumcases}
We assume $c_{1K}>\mu_1$ and $c_{2K}>\mu_2$, then the admissible region for $(\lambda_1,\lambda_2)$ is $[0,\mu_1]\times [0,\mu_2]$. If both commodities have arrival rates interior to the admissible region, then the network is queue length stable for both commodities under \eqref{eqn:DiffBacklogBP-Single}. However, suppose that commodity 1 is overloaded ($\lambda_1>\mu_1$), Lemma \ref{lem:2commodity} shows that under $c_{1K}/\mu_1 > c_{2K}/\mu_2$, $\lambda_2<\mu_2$ does not guarantee the queue length stability of commodity 2, which explains the instability example given in the introduction (Fig.~\ref{fig:TwoNodeToyExample}) due to the non-existence of equilibrium point for the subsystem \eqref{eqn:systemexp2}.

\begin{lemma}
\label{lem:2commodity}
For the 2-commodity toy system in Fig.~\ref{fig:TwoNodeToyExample}, suppose that $\lambda_1>\mu_1$ and $c_{1K}/\mu_1 > c_{2K}/\mu_2$, then under the \eqref{eqn:systemexp1} and \eqref{eqn:systemexp2}, the subsystem \eqref{eqn:systemexp2} has an equilibrium point if and only if $\lambda_2\in[0,\frac{\mu_1}{c_{1K}}c_{2K}]\subset [0,\mu_2)$.
\end{lemma}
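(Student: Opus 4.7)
The plan is to exploit the overload of commodity 1 to pin the equilibrium value of $\beta_K$, and then to read off the admissible range of $\lambda_2$ from the requirement that the commodity-2 logistic factors lie in $[0,1]$. I work throughout in the step-function limit $a\to\infty$, $\epsilon\to 0$ specified by the policy, so each $\alpha$ or $\beta$ factor takes value in $\{0,1\}$ except at its transition point where it may assume any intermediate value in $[0,1]$.

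First I would argue that, since $\lambda_1>\mu_1$ while the total service rate for commodity 1 is at most $\mu_1$, the source queue $q_1^{(1)}$ must diverge along every trajectory; consequently $\alpha_{1K}^{(1)}=1$ at any joint equilibrium, and the commodity-1 balance at node $K$ reduces to $c_{1K}\beta_K=\mu_1\alpha_{KT}^{(1)}$. I then rule out the degenerate case $q_K^{(1)}=0$: it would force $\alpha_{KT}^{(1)}=0$, hence $\beta_K=0$, which requires the shared buffer to be saturated by commodity 2 alone ($q_K^{(2)}=b_K$); but at such a point $\dot q_K^{(2)}=-\mu_2<0$, contradicting equilibrium. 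Hence $q_K^{(1)}>0$, $\alpha_{KT}^{(1)}=1$, and the key pinning relation $\beta_K=\mu_1/c_{1K}$ must hold at any equilibrium of the joint dynamics.

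With $\beta_K$ fixed, the commodity-2 equilibrium conditions become purely algebraic. Setting $\dot q_2^{(2)}=0$ yields $\alpha_{2K}^{(2)}=\lambda_2 c_{1K}/(\mu_1 c_{2K})$, and $\dot q_K^{(2)}=0$ yields $\alpha_{KT}^{(2)}=\lambda_2/\mu_2$. These admit a choice of $q_2^{(2)}$ and $q_K^{(2)}$ if and only if both right-hand sides lie in $[0,1]$; under the hypothesis $c_{1K}/\mu_1>c_{2K}/\mu_2$ the first constraint is strictly tighter and produces exactly the advertised range $\lambda_2\in[0,\mu_1 c_{2K}/c_{1K}]\subset[0,\mu_2)$. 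This yields the ``only if'' direction; for ``if'' I construct the equilibrium explicitly by placing $q_K^{(1)}+q_K^{(2)}$ at the transition point of $\beta_K$ where the logistic equals $\mu_1/c_{1K}$, and then placing $q_K^{(2)}$ and $q_2^{(2)}-q_K^{(2)}$ at the transition points that realize the required values of $\alpha_{KT}^{(2)}$ and $\alpha_{2K}^{(2)}$.

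The step I expect to be most delicate is making the step-function limit precise: as $a\to\infty$ the factors $\alpha$ and $\beta$ become discontinuous and take intermediate values only on measure-zero boundary layers, so the existence of such ``boundary-layer'' equilibria needs care, either by exhibiting an equilibrium for each finite $a$ and passing to the limit, or by interpreting equilibrium in the sense of a differential inclusion. Once that subtlety is dispatched, the boundary case $\lambda_2=0$ is handled trivially by $q_2^{(2)}=q_K^{(2)}=0$ together with $q_K^{(1)}$ on the saturation threshold, and the rest reduces to the bookkeeping above.
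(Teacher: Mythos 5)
Your proof follows essentially the same route as the paper: use the overload $\lambda_1 > \mu_1$ to pin $\beta_K = \mu_1/c_{1K}$ at any equilibrium, then translate the commodity-2 balance into requirements $\alpha_{2K}^{(2)}, \alpha_{KT}^{(2)} \in [0,1]$ and observe that under $c_{1K}/\mu_1 > c_{2K}/\mu_2$ the first is the binding one, giving $\lambda_2 \in [0, \mu_1 c_{2K}/c_{1K}]$. The main differences are that you supply two things the paper leaves implicit: you rule out the degenerate case $q_K^{(1)}=0$ (where $\alpha_{KT}^{(1)}=0$ would force $\beta_K = 0$ and then $\dot q_K^{(2)} < 0$, a contradiction), and you address the ``if'' direction by explicitly constructing boundary-layer equilibria at the logistic transition points, whereas the paper only argues necessity. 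You also flag the $a\to\infty$ discontinuity issue, which the paper sidesteps; that is a genuine subtlety, though it affects the paper's argument as much as yours. Overall: same approach, with a more complete and more careful execution.
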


\begin{proof}
To guarantee that the subsystem \eqref{eqn:systemexp2} has an equilibrium point, we need have $\dot{q}_2^{(2)}=0$ and $\dot{q}_K^{(2)}=0$
i.e., $\lambda_2=c_{2K}\alpha_{2K}^{(2)}\beta_K=\mu_2 \alpha_{KT}^{(2)}$. To capture this, we first identify the value of $\beta_K$.
Since $c_{1K}/\mu_1 > c_{2K}/\mu_2$, then when node $K$ saturates, commodity 1 will squeeze out commodity 2 and dominates in node $K$, hence $\alpha_{1K}^{(1)}\rightarrow 1$ as queue backlog at node 1 increases to infinity and $\alpha_{KT}^{(1)}\rightarrow 1$ since there exists positive queue backlog length of commodity 1 in node $K$. Since node $K$ has finite buffer, then when $t\rightarrow \infty$ we should have
$$
\dot{q}_K^{(1)}=c_{1K}\alpha_{1K}^{(1)}\beta_K-\mu_1 \alpha_{KT}^{(1)}=c_{1K}\beta_K-\mu_1=0,
$$
and thus $\beta_K=\mu_1/c_{1K}$. Then we require
$$
\lambda_2=c_{2K}\alpha_{2K}^{(2)}\beta_K=\frac{\mu_1c_{2K}}{c_{1K}}\alpha_{2K}^{(2)}\in\left[0,\frac{\mu_1c_{2K}}{c_{1K}}\right]
$$
to ensure the existence of an equilibrium point, 
and $\left[0,\frac{\mu_1c_{2K}}{c_{1K}}\right] \subset [0,\mu_2)$ because $c_{1K}/\mu_1 > c_{2K}/\mu_2$.
\end{proof}

%

Lemma \ref{lem:2commodity} quantifies the shrinkage of the range of $\lambda_2$ under which backpressure can stabilize commodity 2. The result can be extended to Theorem \ref{prop:buffer-no-allocation}, where $C$ commodities shares a single buffer in the one-hop system as shown in Fig.~\ref{fig:CcommoditiesExample}. Theorem \ref{prop:buffer-no-allocation} identifies the maximum arrival rate of each commodity to guarantee the existence of an equilibrium point for the subsystem of each commodity under the policy based on backpressure, in the form of \eqref{eqn:systemexp2}, when overloading occurs on one commodity. Proof of Theorem \ref{prop:buffer-no-allocation} is similar to that of Lemma \ref{lem:2commodity} above.

\begin{figure}[!htbp]
\centering
\includegraphics[width=0.8\linewidth]{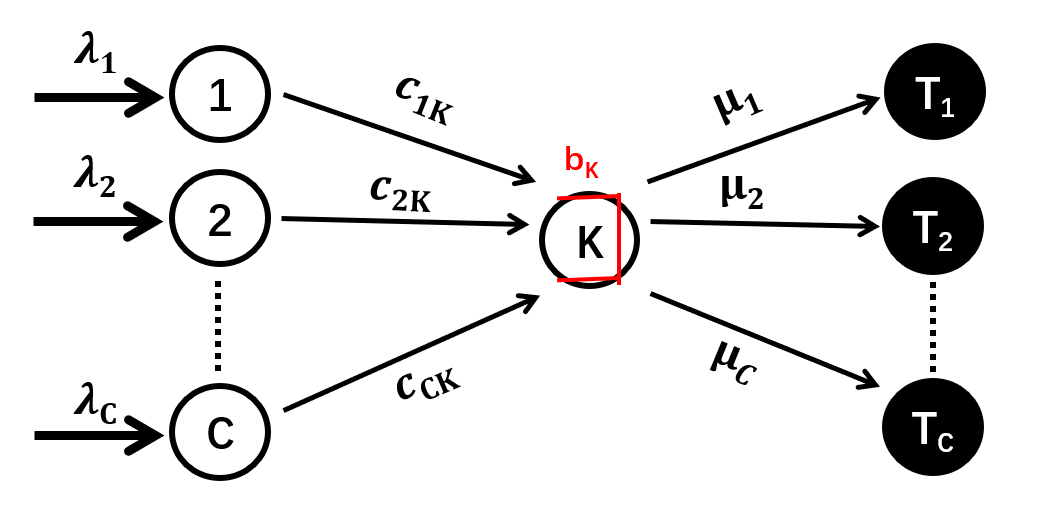}
\caption{One-hop system with $C$ commodities}
\label{fig:CcommoditiesExample}
\end{figure}

\begin{theorem}
\label{prop:buffer-no-allocation}
For the system in Fig.~\ref{fig:CcommoditiesExample}, assume that $c_{iK}>\mu_i,~\forall i=1,2,\dots,C$, and w.l.o.g $c_{iK}/\mu_i>c_{jK}/\mu_j$ for $\forall i,j,~1\leq i<j\leq C$. Suppose that commodity $\ell$ is the \emph{only} overloaded commodity ($\lambda_\ell>\mu_\ell$), then there exists an equilibrium point for the subsystem of commodity $p$ ($p\neq \ell$) with $\lambda_{p}\in [0,\mu_{p})$ for $p=1,2,\dots,\ell-1$ and with $\lambda_{p}\in [0,\mu_\ell c_{p K}/c_{\ell K}]\subset [0,\mu_{p})$ for $p=\ell+1,\dots,C$, under the policy in the form of \eqref{eqn:systemexp2} for every commodity.
\end{theorem}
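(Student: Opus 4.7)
The plan is to extend the reasoning of Lemma \ref{lem:2commodity} to $C$ commodities in three stages: (i) pin down the buffer-saturation factor $\beta_K$ from the overloaded commodity's equilibrium at node $K$, (ii) write the source-node and $K$-node equilibrium conditions for each non-overloaded commodity $p$, and (iii) combine the resulting bound on $\lambda_p$ with the natural admissibility bound $\lambda_p<\mu_p$ using the ordering of the ratios $c_{iK}/\mu_i$.

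First I would argue that since $\lambda_\ell>\mu_\ell$ and $c_{\ell K}>\mu_\ell$, commodity $\ell$'s source backlog grows without bound, driving $\alpha_{\ell K}^{(\ell)}\to 1$ as $t\to\infty$. Because node $K$ has a finite shared buffer, $q_K^{(\ell)}$ stays bounded and must approach a positive steady-state value (otherwise $\dot{q}_K^{(\ell)}=c_{\ell K}\beta_K>0$ at $q_K^{(\ell)}=0$), so $\alpha_{KT}^{(\ell)}\to 1$. Substituting into $\dot{q}_K^{(\ell)}=c_{\ell K}\beta_K-\mu_\ell=0$ then fixes $\beta_K=\mu_\ell/c_{\ell K}$, independent of the other commodities' parameters.

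Next, for any commodity $p\ne\ell$, the subsystem equilibrium requires $\lambda_p=c_{pK}\alpha_{pK}^{(p)}\beta_K=\mu_p\alpha_{KT}^{(p)}$. Since $\alpha_{pK}^{(p)},\alpha_{KT}^{(p)}\in(0,1)$, a consistent equilibrium exists if and only if $\lambda_p<\mu_p$ (from the $K$-node equation) and $\lambda_p\le c_{pK}\beta_K=\mu_\ell c_{pK}/c_{\ell K}$ (from the source equation). The assumed ratio ordering resolves which bound is tighter: for $p<\ell$, $c_{pK}/\mu_p>c_{\ell K}/\mu_\ell$ gives $\mu_\ell c_{pK}/c_{\ell K}>\mu_p$, so $\lambda_p\in[0,\mu_p)$ is the binding condition; for $p>\ell$ the inequality reverses and $\lambda_p\in[0,\mu_\ell c_{pK}/c_{\ell K}]\subset[0,\mu_p)$ becomes binding.

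The main obstacle will be rigorously justifying the pinning of $\beta_K$ when several commodities simultaneously draw from the shared buffer. The key observation is that the overloaded commodity alone forces $\alpha_{\ell K}^{(\ell)}\to 1$ via its unbounded source queue, and together with the finite buffer at $K$ and the positivity of $q_K^{(\ell)}$ at steady state, this uniquely determines $\beta_K$ regardless of the occupancy of other commodities in the buffer. Once $\beta_K$ is fixed, inverting the sigmoid-type equilibrium equations for each non-overloaded commodity is straightforward and mirrors the two-commodity derivation.
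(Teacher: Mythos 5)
Your proposal is correct and follows the route the paper intends (the paper states that the proof of Theorem~\ref{prop:buffer-no-allocation} is ``similar to that of Lemma~\ref{lem:2commodity}'' and omits the details): pin $\beta_K=\mu_\ell/c_{\ell K}$ from the overloaded commodity's balance at node $K$, then read off the two equilibrium constraints $\lambda_p\le c_{pK}\beta_K$ and $\lambda_p<\mu_p$ for each $p\ne\ell$ and use the ratio ordering to decide which is binding. One small but genuine improvement over the paper's Lemma~\ref{lem:2commodity} reasoning: you justify $\alpha_{KT}^{(\ell)}\to 1$ via the contradiction that $q_K^{(\ell)}=0$ would give $\dot q_K^{(\ell)}=c_{\ell K}\beta_K>0$, rather than via ``commodity~1 squeezes out commodity~2.'' The latter argument relies on the overloaded commodity also having the largest ratio $c_{iK}/\mu_i$ (as in Lemma~\ref{lem:2commodity} where $\ell=1$), whereas your positivity argument works for arbitrary $\ell$, which is exactly what Theorem~\ref{prop:buffer-no-allocation} needs.
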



\section{Conclusion}

In this paper, we propose an ODE model that can capture the dynamics of buffered communication systems to study network stability. For single-commodity systems, we propose a sufficient condition for a local policy to stabilize the network. The result characterizes a set of policies, and captures systems with arbitrary buffers. For such policies the network stability problem is reduced to an problem testing the existence of an equilibrium point for the ODE system. For multi-commodity systems, we extend the condition by incorporating an additional condition on the coupling level between different commodities, and explain the existence of an equilibrium point in different buffer settings.
Future work includes obtaining necessary conditions for network stability, and the analysis of other network performance metrics such as throughput, delay, and fairness using this framework.

\bibliography{FiniteBufferShort}
\bibliographystyle{IEEEtran}

\appendix

\subsection{Proof of Theorem \ref{prop:single1}}
\label{subsec:proof_theorem_1}

The basic idea is to use Theorem \ref{thm:neg_real_part} to show all the eigenvalues of $\mathbf{J}$ at the equilibrium point under \eqref{eqn:local_stability} have negative real part. 
\begin{theorem}
\label{thm:neg_real_part}
\cite{hespanha2018linear} All the eigenvalues of a matrix $\mathbf{J}\in \mathbb{R}^{N\times N}$ have negative real parts if and only if there exists a symmetric positive-definite matrix $\mathbf{A}\in \mathbb{R}^{N\times N}$ so that $\mathbf{A}\mathbf{J} + \mathbf{J}^{T}\mathbf{A}$ is negative definite.
\end{theorem}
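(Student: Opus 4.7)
This is the classical continuous-time Lyapunov stability theorem, so the natural strategy is to split the biconditional and handle each direction separately: the ``if'' direction via a one-line computation on an arbitrary eigenvector, and the ``only if'' direction via the standard Lyapunov-integral construction. I would treat the easier direction first.

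For the ``if'' direction, I would pick any eigenvalue $\lambda\in\mathbb{C}$ of $\mathbf{J}$ with (possibly complex) eigenvector $\mathbf{v}\neq \mathbf{0}$ and evaluate $\mathbf{v}^{*}(\mathbf{A}\mathbf{J}+\mathbf{J}^{T}\mathbf{A})\mathbf{v}$. Using $\mathbf{J}\mathbf{v}=\lambda \mathbf{v}$, the symmetry $\mathbf{A}^{T}=\mathbf{A}$, and the fact that $\mathbf{v}^{*}\mathbf{A}\mathbf{v}$ is real, this quadratic form collapses to $2\,\mathrm{Re}(\lambda)\,\mathbf{v}^{*}\mathbf{A}\mathbf{v}$. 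The left-hand side is strictly negative by the negative-definite hypothesis on $\mathbf{A}\mathbf{J}+\mathbf{J}^{T}\mathbf{A}$, while $\mathbf{v}^{*}\mathbf{A}\mathbf{v}>0$ by positive-definiteness of $\mathbf{A}$, so dividing gives $\mathrm{Re}(\lambda)<0$.

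For the converse, the plan is to construct $\mathbf{A}$ explicitly as the Lyapunov integral
\[
\mathbf{A} \,:=\, \int_{0}^{\infty} e^{\mathbf{J}^{T} t}\,\mathbf{Q}\, e^{\mathbf{J} t}\, dt,
\]
where $\mathbf{Q}$ is any chosen symmetric positive-definite matrix (say $\mathbf{Q}=\mathbf{I}$). I would then verify three items in turn: (a) the integral converges absolutely; (b) $\mathbf{A}$ is symmetric positive-definite, which follows by inspection from the rewriting $\mathbf{x}^{T}\mathbf{A}\mathbf{x}=\int_{0}^{\infty}(e^{\mathbf{J} t}\mathbf{x})^{T}\mathbf{Q}(e^{\mathbf{J} t}\mathbf{x})\,dt$; and (c) recognizing the integrand of $\mathbf{A}\mathbf{J}+\mathbf{J}^{T}\mathbf{A}$ as $\tfrac{d}{dt}\bigl(e^{\mathbf{J}^{T} t}\mathbf{Q} e^{\mathbf{J} t}\bigr)$ and applying the fundamental theorem of calculus to obtain $\mathbf{A}\mathbf{J}+\mathbf{J}^{T}\mathbf{A}=\bigl[e^{\mathbf{J}^{T} t}\mathbf{Q} e^{\mathbf{J} t}\bigr]_{0}^{\infty}=-\mathbf{Q}$, which is negative-definite as required.

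The one genuine obstacle is step (a): I need the exponential bound $\|e^{\mathbf{J} t}\|\leq C e^{-\alpha t}$ for some $\alpha>0$ to make the Lyapunov integral converge in the first place. I would derive this from the real Jordan form of $\mathbf{J}$: the Hurwitz hypothesis ensures every Jordan-block entry takes the form $t^{k}e^{\lambda t}$ with $\mathrm{Re}(\lambda)<0$, and each such term is dominated by $C_{\epsilon}e^{-(|\mathrm{Re}(\lambda)|-\epsilon)t}$ for any small $\epsilon>0$, so choosing any $\alpha$ strictly between $0$ and $\min_{\lambda}|\mathrm{Re}(\lambda)|$ yields the uniform bound. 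Once this is in place, absolute convergence of the integral and the interchange of differentiation and integration are entirely routine.
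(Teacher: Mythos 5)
Your proof is correct and is the standard textbook argument. The paper does not actually prove this statement; it is quoted with a citation to Hespanha's \emph{Linear Systems Theory} and used as a black box in the proof of Theorem~\ref{prop:single1}, so there is no in-paper proof to compare against. Your two directions are both sound: the eigenvector computation $\mathbf{v}^{*}(\mathbf{A}\mathbf{J}+\mathbf{J}^{T}\mathbf{A})\mathbf{v}=2\,\mathrm{Re}(\lambda)\,\mathbf{v}^{*}\mathbf{A}\mathbf{v}$ (using that a real symmetric matrix negative definite on $\mathbb{R}^{N}$ remains so under the Hermitian form on $\mathbb{C}^{N}$), and the Lyapunov integral $\mathbf{A}=\int_{0}^{\infty}e^{\mathbf{J}^{T}t}\mathbf{Q}e^{\mathbf{J}t}\,dt$ with the fundamental-theorem-of-calculus telescoping that gives $\mathbf{A}\mathbf{J}+\mathbf{J}^{T}\mathbf{A}=-\mathbf{Q}$. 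The exponential decay bound from the Jordan form is exactly the right way to justify convergence and the boundary term vanishing at infinity. One small point worth making explicit if you write this out in full: positive definiteness of $\mathbf{A}$ requires observing that $e^{\mathbf{J}t}$ is invertible, so $e^{\mathbf{J}t}\mathbf{x}\neq\mathbf{0}$ for all $t$ whenever $\mathbf{x}\neq\mathbf{0}$, which makes the integrand strictly positive on a set of positive measure.
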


We prove the existence of a \emph{diagonal} matrix $\mathbf{A}:=\text{diag}\{a_i\}_{i=1}^N$ with $a_i>0,\forall i=1,\dots,N$ so that $\mathbf{AJ}+\mathbf{J}^T\mathbf{A}$ is negative definite. The proof relies on the notation 
$\mathbf{J}_0 = \mathbf{J} - \boldsymbol{\Lambda}$, 
where $\boldsymbol{\Lambda}:=\text{diag}\left\{-\frac{\partial g_{iT}}{\partial q_i}\right\}_{i=1}^N$, and the following Lemma \ref{lem:positive_vector_exist}, with proof in Appendix \ref{eqn:lemma_positive_vector_exist}.

\begin{lemma}
\label{lem:positive_vector_exist}
$\exists \boldsymbol{\delta}\in \mathbb{R}_+^N$ such that $\mathbf{J}_0\boldsymbol{\delta}=\boldsymbol{0}$. 
\end{lemma}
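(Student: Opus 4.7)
The plan is to exploit the Metzler (non-negative off-diagonal) structure of $\mathbf{J}_0$ together with a zero column-sum property coming from flow conservation, and then invoke the Perron--Frobenius theorem on a suitably shifted non-negative matrix to extract a strictly positive right null vector.

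First, I would verify two structural properties of $\mathbf{J}_0$. Since $\mathbf{J}_0=\mathbf{J}-\boldsymbol{\Lambda}$ differs from $\mathbf{J}$ only on the diagonal (the egress derivative $\partial g_{iT}/\partial q_i$ has been added back), reading off \eqref{eqn:Jacobian_single} under \eqref{eqn:local_stability} shows that every off-diagonal entry of $\mathbf{J}_0$ is non-negative: $(\mathbf{J}_0)_{iu}=-\partial g_{iu}/\partial q_u>0$ when $(i,u)\in\mathcal{E}$, and $(\mathbf{J}_0)_{iu}=\partial g_{ui}/\partial q_u>0$ when $(u,i)\in\mathcal{E}$; the diagonal entries $(\mathbf{J}_0)_{uu}=\sum_{k:(k,u)\in\mathcal{E}}\partial g_{ku}/\partial q_u-\sum_{j:(u,j)\in\mathcal{E}}\partial g_{uj}/\partial q_u$ are strictly negative. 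I would then perform a direct calculation to show that every column of $\mathbf{J}_0$ sums to zero, i.e.\ $\mathbf{1}^T\mathbf{J}_0=\mathbf{0}^T$: in the $u$-th column, the diagonal contribution precisely cancels the off-diagonal contributions, which reflects the conservation-of-flow observation that perturbing $q_u$ merely redistributes internal flow between $u$ and its neighbors once the egress term has been set aside.

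Next, I would shift to the non-negative matrix $\mathbf{M}:=\mathbf{J}_0+c\mathbf{I}$ with $c>\max_i|(\mathbf{J}_0)_{ii}|$. The column sums of $\mathbf{M}$ all equal $c$, so its induced $\ell_1$ operator norm is $c$ and hence $\rho(\mathbf{M})\leq c$; simultaneously, $\mathbf{1}^T\mathbf{M}=c\mathbf{1}^T$ exhibits $c$ as an actual eigenvalue of $\mathbf{M}$, forcing $\rho(\mathbf{M})=c$. To invoke Perron--Frobenius I need irreducibility of $\mathbf{M}$, and its off-diagonal support graph is exactly the undirected graph underlying $\mathcal{G}$ interpreted as a bidirected graph (an arc in both directions between every pair of nodes joined by an edge of $\mathcal{E}$ in either orientation). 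Under the natural assumption that $\mathcal{G}$ is weakly connected---otherwise one argues on each component separately and concatenates---this graph is strongly connected, so $\mathbf{M}$ is irreducible. Perron--Frobenius then supplies a strictly positive right eigenvector $\boldsymbol{\delta}\in\mathbb{R}_+^N$ with $\mathbf{M}\boldsymbol{\delta}=c\boldsymbol{\delta}$, which rearranges to the desired $\mathbf{J}_0\boldsymbol{\delta}=\mathbf{0}$.

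The delicate step is the zero column-sum identity: it is this conservation-of-flow fact that forces $\rho(\mathbf{M})$ to coincide with the all-ones column-sum eigenvalue and thereby allows Perron--Frobenius to pin down a strictly positive vector in the right null space of $\mathbf{J}_0$. Everything else reduces to a sign check in \eqref{eqn:Jacobian_single} and a textbook application of Perron--Frobenius to an irreducible non-negative matrix.
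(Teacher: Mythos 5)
Your proof is correct and follows the same broad strategy as the paper's---shift $\mathbf{J}_0$ by a positive multiple of the identity to obtain a non-negative (indeed irreducible, under weak connectivity) matrix, then invoke Perron--Frobenius to extract a strictly positive right eigenvector---but you pin down the Perron root in a genuinely cleaner way. The paper identifies the Perron eigenvalue of $\mathbf{J}_\theta=\mathbf{J}_0+\theta\mathbf{I}$ by a magnitude comparison: it takes $\theta$ large enough that $\theta>|\lambda_i+\theta|$ for all nonzero eigenvalues $\lambda_i$ of $\mathbf{J}_0$, which only works when every such $\lambda_i$ has strictly negative real part, and so it must devote a separate Step~2 to ruling out pure imaginary eigenvalues by contradiction. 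Your argument sidesteps that whole issue: the observation $\mathbf{1}^T\mathbf{J}_0=\mathbf{0}^T$ makes every column sum of $\mathbf{M}=\mathbf{J}_0+c\mathbf{I}$ equal to $c$, so $\rho(\mathbf{M})\leq\|\mathbf{M}\|_1=c$ while $c$ is visibly an eigenvalue via $\mathbf{1}^T\mathbf{M}=c\mathbf{1}^T$, forcing $\rho(\mathbf{M})=c$ outright without any spectral case analysis. This buys you a shorter proof and also weakens the required shift to any $c>\max_i|(\mathbf{J}_0)_{ii}|$, rather than a $\theta$ calibrated against the (unknown) spectrum. Both arguments implicitly rely on $\mathcal{G}$ being weakly connected so that the symmetric off-diagonal support of the shifted matrix yields irreducibility; you state this assumption more transparently than the paper, whose phrase about the acyclic network ``not containing any strongly connected component'' is a non sequitur for irreducibility of $\mathbf{J}_\theta$ (what matters is connectivity of the underlying undirected graph, exactly as you say).
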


We now step into the formal proof. Denote $\mathbf{Q}:=\mathbf{AJ}+\mathbf{J}^T\mathbf{A}$, where we can obtain
\begin{equation}
\small
\label{eqn:Q_expression}
\mathbf{Q}_{ij} = \begin{cases}
- a_i \frac{\partial g_{ij}}{\partial q_j} + a_j \frac{\partial g_{ij}}{\partial q_i}, ~i\neq j \\
2a_i\left( \sum_{k: (k,i)\in \mathcal{E}} \frac{\partial g_{ki}}{\partial q_i} -  \sum_{l: (i,l)\in \mathcal{E}} \frac{\partial g_{il}}{\partial q_i} -  \frac{\partial g_{iT}}{\partial q_i} \right),~i=j
\end{cases}
\end{equation}
We show that there exists a negative vector $\boldsymbol{\alpha}:=\{\alpha_{ij}\}_{(i,j)\in \mathcal{E}}$ so that
\begin{equation}
\small
\label{eqn:quadratic_form}
\mathbf{z}^T \mathbf{Qz} = \sum_{i: (i,j)\in \mathcal{E}} \alpha_{ij}\left(a_iz_i - a_jz_j\right)^2 - 2\sum_{i=1}^N a_i \frac{\partial g_{iT}}{\partial q_i} z_i^2
\end{equation}
where $\alpha_{ij}$ is a function of the entries in $\mathbf{J}$. Once \eqref{eqn:quadratic_form} is proved, then $\forall \mathbf{z}  \in \mathbb{R}^N\backslash \boldsymbol{0}$, $\mathbf{z}^T \mathbf{Qz}<0$ due to the existence of $i$ such that $\frac{\partial g_{iT}}{\partial q_i}$ is negative.

Based on Lemma \ref{lem:positive_vector_exist}, there exists $\boldsymbol{\delta}\in \mathbb{R}_+^{N}$ so that $\mathbf{J}_0\boldsymbol{\delta}=\boldsymbol{0}$. Take $\mathbf{A}:=\text{diag}\{\delta_i^{-1}\}_{i=1}^N$, i.e. $a_i=\delta_i^{-1}>0,~\forall i=1,\dots,N$, and take 
$
\alpha_{ij}:=-\delta_i \frac{\partial g_{ij}}{\partial q_i} + \delta_j \frac{\partial g_{ij}}{\partial q_j},
$
which is negative under the condition $\frac{\partial g_{ij}}{\partial q_i}>0,~\frac{\partial g_{ij}}{\partial q_j}<0$. Then based on \eqref{eqn:Q_expression},  we can show the LHS of \eqref{eqn:quadratic_form} is
\begin{equation}
\small
\label{eqn:quadratic_form_LHS}
\begin{aligned}
&\mathbf{z}^T \mathbf{Qz} = \sum_{i=1}^N \mathbf{Q}_{ii} z_i^2 + 2\sum_{1\leq i < j \leq N} \mathbf{Q}_{ij}z_iz_j
\\& = \sum_{i=1}^N \frac{2}{\delta_i} \left(\sum_{k:(k,i)\in \mathcal{E}} \frac{\partial g_{ki}}{\partial q_i} - \sum_{j:(i,j)\in \mathcal{E}} \frac{\partial g_{ij}}{\partial q_i} \right) z_i^2 
\\& \quad + 2\sum_{1\leq i < j \leq N} \left(\frac{1}{\delta_i} \frac{\partial g_{ij}}{\partial q_j} -  \frac{1}{\delta_j} \frac{\partial g_{ij}}{\partial q_i}  \right) z_iz_j - 2\sum_{i=1}^N a_i \frac{\partial g_{iT}}{\partial q_i} z_i^2
\\& = \sum_{i=1}^N \frac{1}{\delta_i^2} \left(\sum_{k:(k,i)\in \mathcal{E}} \alpha_{ki} + \sum_{j:(i,j) \in \mathcal{E}} \alpha_{ij}\right) z_i^2 
\\& \quad + 2\sum_{1\leq i<j\leq N} \alpha_{ij} z_iz_j - 2\sum_{i=1}^N a_i \frac{\partial g_{iT}}{\partial q_i} z_i^2
\end{aligned}
\end{equation}
where the equivalence of the coefficients of $z_i^2$ for any $i$ in the last equation is due to $\mathbf{J}_0\boldsymbol{\delta}=\boldsymbol{0}$. It is  trivial to verify the coefficients of $\{z_i^2\}_{i=1}^N$ and $\{z_{ij}\}_{1\leq i<j\leq N}$ in \eqref{eqn:quadratic_form_LHS} match to the RHS of \eqref{eqn:quadratic_form}, where for $(i,j)\notin \mathcal{E}$, $g_{ij}\equiv 0$ and the coefficient of $z_iz_j$ is $0$. Thus we obtain the proof.

\subsection{Proof of Lemma \ref{lem:positive_vector_exist}}
\label{eqn:lemma_positive_vector_exist}

Note that $\mathbf{J}_0$ is not full rank as we can verify $\boldsymbol{1}^T\mathbf{J}_0=\boldsymbol{0}$. Therefore there exists an eigenvalue $0$. Denote the eigenvalues of $\mathbf{J}_0$ as $\{\lambda_i\}_{i=1}^N$, without loss of generality that they are sorted with non-increasing real part: $\text{Re}(\lambda_1) \geq \text{Re}(\lambda_2) \dots \geq \text{Re}(\lambda_N)$. Since $\mathbf{J}_0$ is diagonally dominant and all diagonal entries are negative, thus $\text{Re}(\lambda_i)\leq 0$ by the Gershgorin circle theorem \cite{li2019eigenvalue}. Therefore $\lambda_1=0$. Suppose $\lambda_i:=r_i+\mathbf{j}u_i,~\forall i\neq 1$ where $r_i\leq 0$.

Denote $\mathbf{J}_{\theta} = \mathbf{J}_0 + \theta \mathbf{I}$. Since all diagonal entries of $\mathbf{J}_0$ are negative while all the off-diagonal entries are non-negative. Then $\exists \theta>0$ so that $\mathbf{J}_{\theta}$ is a matrix with all entries non-negative. Since the network is acyclic
, it does not contain any strongly connected component. Therefore $\mathbf{J}_{\theta}$ is a irreducible non-negative matrix \cite{meyer2000matrix}, to which the Perron-Frobenius theorem \cite{meyer2000matrix} can be applied: For such $\theta$, there exists a real eigenvalue $r>0$ such that (i) $\exists \mathbf{v} \in \mathbb{R}_+^N$ so that $\mathbf{J}_{\theta}\mathbf{v} = r\mathbf{v}$, and (ii) any other eigenvalue of $\mathbf{J}_{\theta}$ has smaller magnitude than $r$.

The $i$-th eigenvalue of $\mathbf{J}_{\theta}$, denoted as $\tilde{\lambda}_i$, equals to $\lambda_i+\theta$. 
Thus $\tilde{\lambda}_1 =\theta$. Take $\theta$ so that $\mathbf{J}_{\theta}$ is non-negative.  

\textbf{Step 1: Consider the case where $\mathbf{J}_0$ has no pure imaginary eigenvalues.} In this case, $r_i<0,~\forall i\neq 1$. The eigenvalues of $\mathbf{J}_{\theta}$ is $\tilde{\lambda}_1 = \theta,~\tilde{\lambda}_i = r_i+\theta + \mathbf{j} u_i$. It is easy to verify that by taking $\theta > \max_{i:i\neq 1} \left\{-2\frac{|\lambda_i|}{\text{Re}(\lambda_i)}\right\}$, we can guarantee that $\theta = \tilde{\lambda}_1 > |\tilde{\lambda}_i|,~\forall i\neq 1$, which means $\tilde{\lambda}_1=\theta = r$.  Thus the eigenvector, denoted as $\mathbf{v}$, of $\mathbf{J}_{\theta}$ associated with eigenvalue $r$, is positive. Therefore
$
\mathbf{J}_{\theta} \mathbf{v} = \left(\mathbf{J}_0 + \theta \mathbf{I}\right)\mathbf{v} = r \mathbf{v} = \theta \mathbf{v}
$
Thus $\mathbf{J}_0\mathbf{v} =0$. This $\mathbf{v}$ is the $\boldsymbol{\delta}$ that meet our condition.

\textbf{Step 2: Prove that $\mathbf{J}_0$ does not have pure imaginary eigenvalues}. It is to show there is no $i\neq 1$ such that $r_i=0$ and $u_i\neq 0$. We prove by contradiction. Suppose that there exists a pair of pure imaginary eigenvalues, w.l.o.g., $\lambda_2$ and $\lambda_3$. Then there exists large enough $\theta>0$ such that $\mathbf{J}_{\theta}$ is non-negative, and $|\tilde{\lambda}_3|=|\tilde{\lambda}_2|>|\tilde{\lambda}_1|>|\tilde{\lambda}_i|,~i\geq 4$. By Perron-Frobenius theorem, $\tilde{\lambda}_3=\lambda_3+\theta$ is a real positive eigenvalue of $\mathbf{J}_{\theta}$, which contradicts that $\lambda_3$ is pure imaginary.

\vspace{1mm}

\subsection{Example of Lemma \ref{prop:single3}}
\label{sec:lemma}

For the policy \eqref{eqn:DiffBacklogBP-Single} based on backpressure,
if we suppose that
$
\lambda_i+\sum_{k:(k,i)\in \mathcal{E}} c_{ki}\leq \sum_{j:(i,j)\in \mathcal{E}} c_{ij}+c_{iT},
$
then we set $\underline{b}_i=0,~\forall i\in \mathcal{V}$, and then set the values of $\{\bar{b}_i\}_{i=1}^N$ such that (i) $\bar{b}_j>\bar{b}_k$ for any link $(j,k)$, and (ii) $\bar{b}_i\leq b_i-\delta_i$ for any node $i$ with finite buffer, where $\delta_i$ is a positive constant close to $0$. We can verify that for $\dot{q}_i=f_i(\mathbf{q})$, when $q_i=\underline{b}_i=0$, then
$
f_i(\mathbf{q})=\lambda_i+\sum_{k:(k,i)\in \mathcal{E}} g_{ki}(q_k,0)-\sum_{j:(i,j)\in \mathcal{E}} g_{ij}(0,q_j)-g_{iT}(0)=\lambda_i+\sum_{k:(k,i)\in \mathcal{E}} g_{ki}(q_k,0)\geq 0,
$
and when $q_i=\bar{b}_i$, note that for any node $j$ such that $(i,j)\in \mathcal{E}$, $g_{ij}(\bar{b}_i,q_j)=c_{ij}$ since $q_j\in [0,\bar{b}_j]\in [0,b_j)\cap[0,\bar{b}_i)$, therefore
$
f_i(\mathbf{q})=\lambda_i+\sum_{k:(k,i)\in \mathcal{E}} g_{ki}(q_k,\bar{b}_i)-\sum_{j:(i,j)\in \mathcal{E}} g_{ij}(\bar{b}_i,q_j)-g_{iT}(\bar{b}_i)
\leq \lambda_i+\sum_{k:(k,i)\in \mathcal{E}}c_{ki}-\sum_{j:(i,j)\in \mathcal{E}} c_{ij}-c_{iT}\leq 0.
$
Therefore the conditions in Lemma \ref{prop:single3} holds, which implies the policy \eqref{eqn:DiffBacklogBP-Single} can ensure a unique stable equilibrium point, and thus render the networks to be queue length stable.

\end{document}